\newcommand{\R}{\mathbb{R}}
\DeclareMathOperator*{\argmin}{arg\,min}
\newcommand{\poly}{\mathop\mathrm{poly}}
\newcommand{\algA}{\mathcal{A}}
\newcommand{\kf}{\psi}
\DeclareMathOperator{\nnz}{nnz}
\newcommand{\norm}[1]{\|#1\|}
\newtheorem*{rep@theorem}{\rep@title}
\newcommand{\newreptheorem}[2]{%
\newenvironment{rep#1}[1]{%
 \def\rep@title{#2 \ref{##1}}%
 \begin{rep@theorem}}%
 {\end{rep@theorem}}}
\newtheorem{theorem}{Theorem}
\newtheorem{corollary}[theorem]{Corollary}
\newtheorem{lemma}[theorem]{Lemma}
\newtheorem*{lemma*}{Lemma}
\title{Is Input Sparsity Time Possible for\\ Kernel Low-Rank Approximation?}
\author{
  Cameron Musco\\MIT\\ \texttt{cnmusco@mit.edu}
 \and
 David P. Woodruff\\Carnegie Mellon University\\ \texttt{dwoodruf@cs.cmu.edu}
  %% examples of more authors
  %% \And
  %% Coauthor \\
  %% Affiliation \\
  %% Address \\
  %% \texttt{email} \\
  %% \AND
  %% Coauthor \\
  %% Affiliation \\
  %% Address \\
  %% \texttt{email} \\
  %% \And
  %% Coauthor \\
  %% Affiliation \\
  %% Address \\
  %% \texttt{email} \\
  %% \And
  %% Coauthor \\
  %% Affiliation \\
  %% Address \\
  %% \texttt{email} \\
}
\begin{document}
% \nipsfinalcopy is no longer used

\maketitle

\begin{abstract}
Low-rank approximation is a common tool used to accelerate kernel methods: the $n \times n$ kernel matrix $K$ is approximated via a rank-$k$ matrix $\tilde K$ which can be stored in much less space and processed more quickly. In this work we study the limits of computationally efficient low-rank kernel approximation. We show that for a broad class of kernels, including the popular Gaussian and polynomial kernels, computing a relative error $k$-rank approximation to $K$ is at least as difficult as multiplying the input data matrix $A \in \mathbb{R}^{n \times d}$ by an arbitrary matrix $C \in \R^{d \times k}$. Barring a breakthrough in fast matrix multiplication, when $k$ is not too large, this requires $\Omega(\nnz(A)k)$ time where $\nnz(A)$ is the number of non-zeros in $A$.
This lower bound matches, in many parameter regimes, recent work on subquadratic time algorithms for low-rank approximation of general kernels \cite{musco2016recursive,musco2017sublinear}, demonstrating that these algorithms are unlikely to be significantly improved, in particular to $O(\nnz(A))$ input sparsity runtimes. At the same time there is hope: we show for the first time that $O(\nnz(A))$ time approximation is possible for general radial basis function kernels (e.g.,
the Gaussian kernel) for the closely related problem of low-rank approximation of the kernelized dataset.
\end{abstract}

\section{Introduction}

The kernel method is a popular technique used to apply linear learning and classification algorithms to datasets with nonlinear structure. Given training input points $a_1,...,a_n \in \R^{d}$, the idea is to replace the standard Euclidean dot product $\langle a_i, a_j \rangle = a_i^T a_j$ with the kernel dot product $\kf(a_i,a_j)$, where $\kf: \R^d \times \R^d \rightarrow \R^+$ is some positive semidefinite function. 
Popular kernel functions include e.g., the Gaussian kernel with $\kf(a_i,a_j) = e^{-\norm{a_i-a_j}^2/\sigma}$ for some bandwidth parameter $\sigma$ and the polynomial kernel of degree $q$ with $\kf(a_i,a_j) = (c + a_i^T a_j)^q$ for some parameter $c$. 

Throughout this work, we focus on kernels where $\kf(a_i,a_j)$ is a function of the dot products $a_i^T a_i = \norm{a_i}^2$,  $a_j^T a_j = \norm{a_j}^2$, and $a_i^T a_j$. Such functions encompass many kernels used in practice, including the Gaussian kernel, the Laplace kernel, the polynomial kernel, and the Matern kernels.

Letting $\mathcal{F}$ be the reproducing kernel Hilbert space associated with $\kf(\cdot,\cdot)$, we can write $\kf(a_i,a_j) = \langle \phi(a_i), \phi(a_j) \rangle$ where $\phi: \R^d \rightarrow \mathcal{F}$ is a typically non-linear \emph{feature map}. We let $\Phi = [\phi(a_1),...,\phi(a_n)]^T$ denote the kernelized dataset, whose $i^{th}$ row is the kernelized datapoint $\phi(a_i)$.

There is no requirement that $\Phi$ can be efficiently computed or stored -- for example, in the case of the Gaussian kernel, $\mathcal{F}$ is an infinite dimensional space. Thus, kernel methods typically work with the kernel matrix $K \in \R^{n \times n}$ with $K_{i,j} = \kf(a_i,a_j)$. We will also sometimes denote $K = \{\kf(a_i,a_j)\}$ to make it clear which kernel function it is generated by. We can equivalently write $K = \Phi \Phi^T$. As long as all operations of an algorithm only access $\Phi$ via the dot products between its rows, they can thus be implemented using just $K$ without explicitly computing the feature map.

Unfortunately computing $K$ is expensive, and a bottleneck for scaling kernel methods to large datasets. For the kernels we consider, where $\kf$ depends on dot products between the input points, we must at least compute the Gram matrix $AA^T$, requiring $\Theta(n^2 d)$ time in general. Even if $A$ is sparse, this takes $\Theta(\nnz(A) n)$ time. Storing $K$ then takes $\Theta(n^2)$ space, and processing it for downstream applications like kernel ridge regression and kernel SVM can be even more expensive.

\subsection{Low-rank kernel approximation}

For this reason, a vast body of work studies how to efficiently approximate $K$ via a low-rank surrogate $\tilde K$ \cite{smola2000sparse,ams01,williams2001using,fine2002efficient,rahimi2007random,avron2014subspace,le2013fastfood,bach2002kernel,drineas2005nystrom,Zhang:2008,belabassNystrom1,cotter2011explicit,wang2013improving,gittens2013revisiting}.
If $\tilde K$ is rank-$k$, it can be stored in factored form in $O(nk)$ space and operated on quickly -- e.g., it can be inverted in just $O(nk^2)$ time to solve kernel ridge regression.

One possibility is to set $\tilde K = K_k$ where $K_k$ is $K$'s best $k$-rank approximation -- the projection onto its top $k$ eigenvectors. $K_k$ minimizes, over all rank-$k$ $\tilde{K}$, the error $\norm{K-\tilde K}_F$, where $\norm{M}_F$ is the Frobenius norm: $(\sum_{i,j} M_{i,j}^2 )^{1/2}$. It in fact minimizes error under any unitarily invariant norm, e.g., the popular spectral norm. Unfortunately, $K_k$ is prohibitively expensive to compute, requiring $\Theta(n^3)$ time in practice, or $n^\omega$ in theory using fast matrix multiplication, where $\omega \approx 2.373$ \cite{le2014powers}.

The idea of much prior work on low-rank kernel approximation is to find $\tilde K$ which is nearly as good as $K_k$, but can be computed much more quickly. Specifically, it is natural to ask for $\tilde K$ fulfilling the following relative error guarantee for some parameter $\epsilon > 0$:
\begin{align}\label{eq:relError}
\norm{K - \tilde K}_F \le (1+\epsilon) \norm{K-K_k}_F.
\end{align}
Other goals, such as nearly matching the spectral norm error $\norm{K-K_k}$ or approximating $K$ entry-wise have also been considered \cite{rahimi2007random,gittens2013revisiting}. Of particular interest to our results is the closely related goal of outputting an orthonormal basis $Z \in \mathbb{R}^{n \times k}$ satisfying for any $\Phi$ with $\Phi \Phi^T = K$:
\begin{align}\label{eq:factorError}
\norm{\Phi - ZZ^T \Phi}_F \le (1+\epsilon) \norm{\Phi - \Phi_k}_F.
\end{align}
\eqref{eq:factorError} can be viewed as a Kernel PCA guarantee -- its asks us to find a low-rank subspace $Z$ such that the projection of our kernelized dataset $\Phi$ onto $Z$ nearly optimally approximates this dataset. Given $Z$, we can approximate $K$ using $\tilde K = ZZ^T \Phi \Phi^T ZZ^T = ZZ^T K ZZ^T$. Alternatively, letting $P$ be the projection onto the row span of $ZZ^T \Phi$, we can write $\tilde K = \Phi P \Phi^T$, which can be computed efficiently, for example, when $P$ is a projection onto a subset of the kernelized datapoints \cite{musco2016recursive}.

\subsection{Fast algorithms for relative-error kernel approximation}

Until recently, all algorithms achieving the guarantees of \eqref{eq:relError} and \eqref{eq:factorError} were at least as expensive as computing the full matrix $K$, which was needed to compute the low-rank approximation \cite{gittens2013revisiting}. 

However, recent work has shown that this is not required. Avron, Nguyen, and Woodruff \cite{avron2014subspace} demonstrate that for the polynomial kernel, $Z$ satisfying \eqref{eq:factorError} can be computed in $O(\nnz(A)q) + n \poly(3^q k/\epsilon)$ time for a polynomial kernel with degree $q$. 

Musco and Musco \cite{musco2016recursive} give a fast algorithm for \emph{any kernel}, using recursive Nystr{\"o}m sampling, which computes $\tilde K$ (in factored form) satisfying $\norm{K-\tilde K} \le \lambda$, for input parameter $\lambda$. With the proper setting of $\lambda$, it can output $Z$ satisfying \eqref{eq:factorError} (see Section C.3 of \cite{musco2016recursive}). Computing $Z$ 
 requires evaluating $\tilde O(k/\epsilon)$ columns of the kernel matrix along with $\tilde O(n(k/\epsilon)^{\omega-1})$ additional time for other computations. Assuming the kernel is a function of the dot products between the input points, the kernel evaluations require $\tilde O(\nnz(A)k/\epsilon)$ time. The results of \cite{musco2016recursive} can also be used to compute $\tilde K$ satisfying \eqref{eq:relError} with $\epsilon = \sqrt{n}$ in $\tilde O(\nnz(A) k + nk^{\omega-1})$ time (see Appendix A of \cite{musco2017sublinear}).

Woodruff and Musco \cite{musco2017sublinear} show that for any kernel, and for any $\epsilon > 0$, it is possible to achieve \eqref{eq:relError} in $\tilde O(\nnz(A)k/\epsilon)+n\poly(k/\epsilon)$ time plus the time needed to compute an $\tilde  O(\sqrt{nk}/\epsilon^2) \times \tilde O(\sqrt{nk/\epsilon})$ submatrix of $K$. If $A$ has uniform row sparsity -- i.e., $\nnz(a_i) \le c \nnz(A)/n$ for some constant $c$ and all $i$, this step can be done in $\tilde O(\nnz(A) k /\epsilon^{2.5})$ time. Alternatively, if $d \le (\sqrt{nk}/\epsilon^2)^\alpha$ for $\alpha < .314$ this can be done in $\tilde O(nk/\epsilon^4) = \tilde O(\nnz(A) k/\epsilon^4)$ time using fast rectangular matrix multiplication \cite{le2012faster,gall2017improved} (assuming that there are no all zero data points so $n \le \nnz(A)$.)

\subsection{Our results}

The algorithms of \cite{musco2016recursive,musco2017sublinear} make significant progress in efficiently solving \eqref{eq:relError} and \eqref{eq:factorError} for general kernel matrices. They demonstrate that, surprisingly, a relative-error low-rank approximation can be computed significantly faster than the time required to write down all of $K$.

A natural question is if these results can be improved. Even ignoring $\epsilon$ dependencies and typically lower order terms, both algorithms use $\Omega(\nnz(A)k)$ time. One might hope to improve this to input sparsity, or near input sparsity time, $\tilde O(\nnz(A))$, which is known for computing a low-rank approximation of $A$ itself \cite{clarkson2013low}. The work of Avron et al. affirms that this is possible for the kernel PCA guarantee of \eqref{eq:factorError} for degree-$q$ polynomial kernels, for constant $q$. Can this result be extended to other popular kernels, or even more general classes?

\subsubsection{Lower bounds}

We show that achieving the guarantee of \eqref{eq:relError} significantly more efficiently than the work of \cite{musco2016recursive,musco2017sublinear} is likely very difficult. Specifically, we prove that for a wide class of kernels, the kernel low-rank approximation problem is as hard as multiplying the input $A \in \R^{n \times d}$ by an arbitrary $C \in \R^{d \times k}$. We have the following result for some common kernels to which our techniques apply:
\begin{theorem}[Hardness for low-rank kernel approximation]\label{thm:introLower}
Consider any polynomial kernel $\kf(m_i,m_j) = (c+m_i^T m_j)^q$, Gaussian kernel $\kf(m_i,m_j) = e^{-\norm{m_i-m_j}^2/\sigma}$,
%, a Laplace kernel $e^{-\norm{m_i,m_j}/\sigma}$
or the linear kernel $\kf(m_i,m_j) = m_i^T m_j$.
Assume there is an algorithm which given $M \in \R^{n \times d}$ with associated kernel matrix $K = \{ \kf(m_i,m_j) \}$, returns $N \in \R^{n \times k}$ in $o(\nnz(M) k)$ time satisfying:
$$\norm{K - NN^T}_F^2 \le \Delta \norm{K - K_k}_F^2$$
for some approximation factor $\Delta$. Then there is an $o(\nnz(A)k) + O(nk^{2})$ time algorithm for multiplying arbitrary integer matrices $A \in \R^{n \times d}$, $C \in \R^{d \times k}$.
\end{theorem}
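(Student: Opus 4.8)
The plan is to reduce matrix multiplication $AC$ to kernel low-rank approximation by embedding the product into a carefully designed kernel matrix whose best rank-$k$ approximation is (essentially) exact, so that any $\Delta$-approximate solution $N$ must recover the column space we need. First I would construct an augmented data matrix $M$ from $A$ and $C$. The key observation is that for the linear kernel $\kf(m_i,m_j) = m_i^T m_j$ we simply have $K = MM^T$, so a low-rank approximation of $K$ is exactly a low-rank approximation of the Gram matrix of the rows of $M$. If I stack $A$ on top of a small number of rows derived from $C$ (and pad columns appropriately so dimensions match), then $K = MM^T$ has a block structure in which one off-diagonal block equals $AC$ (up to the padding/scaling), and the whole matrix $K$ has rank at most $k$ by design — e.g.\ by taking the $C$-derived block to be $C^T C$ or by appending $k$ extra coordinates that make the relevant subspace exactly $k$-dimensional. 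Then $\norm{K - K_k}_F = 0$, so the hypothesized algorithm must return $N$ with $\norm{K - NN^T}_F = 0$, i.e.\ $NN^T = K$ exactly; in particular $\colspan(N) = \colspan(M)$, and from $N$ (which is $n' \times k$ with $n' = n + O(k)$) one can read off, or solve a small $O(nk^2)$ linear system for, the entries of $AC$.

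The main technical content is then the extension from the linear kernel to the polynomial and Gaussian kernels, which is where I expect the real obstacle to lie. For the polynomial kernel $\kf(m_i,m_j) = (c + m_i^T m_j)^q$, the trick is that $(c + m_i^T m_j)^q = \inprod{\phi(m_i),\phi(m_j)}$ for the explicit polynomial feature map $\phi$ of dimension $\binom{d+q}{q}$; by choosing the entries of $M$ to be small integers (say in a bounded range) one can ensure the distinct monomials in $m_i^T m_j$ remain linearly independent and the rank of $\Phi$ — hence of $K$ — is controlled, and by scaling we can arrange that the "cross term" linear in $m_i^T m_j$ dominates, letting us extract $AC$. The cleanest route may be to use an interpolation/derivative argument: evaluate the kernel on scaled copies $t \cdot M$ for several values of $t$, so that $\kf(t a_i, t c_j) = \sum_{\ell=0}^q \binom{q}{\ell} c^{q-\ell} t^{2\ell}(a_i^T c_j)^\ell$ is a polynomial in $t$ whose degree-$2$ coefficient is $\binom{q}{2} c^{q-1}(a_i^T c_j)$ — recovering $AC$ via finite differences in $t$ from $O(q)$ kernel LRA calls. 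For the Gaussian kernel, $e^{-\norm{m_i - m_j}^2/\sigma} = e^{-\norm{m_i}^2/\sigma} e^{-\norm{m_j}^2/\sigma} e^{2 m_i^T m_j/\sigma}$, so after stripping the rank-one diagonal rescaling (which is harmless since it doesn't change ranks or the subspace structure in the relevant way, up to care with the approximation guarantee), the same scaling-in-$t$ / Taylor-coefficient-extraction idea applies: the coefficient of $t^2$ in $e^{2 t^2 a_i^T c_j/\sigma}$ is $2 a_i^T c_j/\sigma$.

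The step I expect to be the main obstacle is ensuring that the best rank-$k$ approximation error is \emph{exactly} (or negligibly close to) zero for the augmented matrix, so that the multiplicative factor $\Delta$ — which could be arbitrarily large — is rendered irrelevant. This forces the construction to make $K$ genuinely rank $\le k$, which constrains how much "room" we have to encode an arbitrary $n \times d$ times $d \times k$ product; the resolution is that the output $AC$ is itself $n \times k$, rank at most $k$, so it \emph{can} live inside a rank-$k$ kernel matrix if the other blocks are chosen to be low rank too (e.g.\ zero or rank-one), and one just has to check the dimension counting: the augmented matrix should have $n + O(k)$ rows and the encoded subspace dimension should be exactly $k$. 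A secondary obstacle is bookkeeping the integrality/bit-complexity so that the small linear-algebra post-processing (the interpolation in $t$, the final solve) runs in $O(nk^2)$ time and the kernel-LRA calls are each on a matrix with $\nnz(M) = O(\nnz(A) + \nnz(C)) = O(\nnz(A) + dk)$ nonzeros, giving total time $o(\nnz(A)k) + O(nk^2)$ as claimed; one must also confirm $o(\nnz(M)k)$ per call sums to $o(\nnz(A)k)$ over the $O(q)$ or $O(1)$ calls, which holds since $q$ is constant for the kernels in the statement.
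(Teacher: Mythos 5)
There is a genuine gap, and it is exactly at the step you flagged as the main obstacle: you cannot make the kernel matrix of the augmented dataset have rank $\le k$ (or even rank $O(k)$) for arbitrary $A$. For the linear kernel, $K = MM^T$ is a Gram matrix, and any construction in which an off-diagonal block equals (a scaling of) $AC$ without precomputing $AC$ forces the rows of $A$ themselves to appear as data points; the corresponding diagonal block is then $AA^T$, whose rank can be as large as $\min(n,d)$. Positive semidefiniteness ties the blocks together: if the off-diagonal block is $XY^T$ then the diagonal block is $XX^T$, so you cannot ``choose the other blocks to be zero or rank-one'' — the only way to get a genuinely rank-$k$ Gram matrix containing $AC$ is to take $X = AC$ itself, which is circular. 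Consequently $\norm{K - K_k}_F$ is not zero, the factor $\Delta$ does not become irrelevant, and the entire argument that $NN^T = K$ exactly (and the finite-difference/interpolation extraction for the polynomial and Gaussian kernels, which needs near-exact entries and would amplify errors) collapses. Your fallback for the nonlinear kernels also quietly assumes access to accurate individual kernel entries, which a $\Delta$-approximate Frobenius-norm guarantee does not provide by itself.

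The paper's proof keeps your block construction $M = [A^T, wC]^T$ but handles arbitrary $\Delta$ by a weighting-plus-integrality argument rather than by forcing the tail to vanish. Since the last $k$ rows and columns of $MM^T$ can be captured by a rank-$2k$ projection, $\norm{K - K_{2k}}_F^2$ is at most $\norm{AA^T}_F^2 \le \Delta_2^4 n^2 d^2$, a bound independent of $w$. Choosing $w = 3\sqrt{\Delta}\,\Delta_2^2 nd$ makes every entry of the block $wAC$ an integer multiple of $w$, while the guaranteed entrywise error $\sqrt{\Delta}\,\Delta_2^2 nd$ is below $w/2$, so rounding each entry of $NN^T$ to the nearest multiple of $w$ recovers $AC$ exactly in $O(nk^{\omega-1})$ time. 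For the polynomial and Gaussian kernels the paper scales the data \emph{down} (weights $w_1, w_2$ chosen so that all $|b_i^Tb_j|$ are tiny) so that in the power-series expansion the term $c_1 b_i^T b_j$ dominates all higher-order terms, adds $1$ (respectively $3$) to the target rank to absorb the constant and rank-one norm terms, and then applies the same rounding argument to multiples of $c_1 w_1 w_2$; no interpolation over multiple calls is needed, and a single LRA call on $B$ with $\nnz(B) \le \nnz(A) + nk$ suffices. If you want to salvage your write-up, this scaling-to-dominate-the-linear-term plus rounding-to-an-integer-grid mechanism is the missing idea.
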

The above applies for \emph{any approximation factor $\Delta$}.
While we work in the real RAM model, ignoring bit complexity, as long as $\Delta = \poly(n)$ and $A,C$ have polynomially bounded entries, our reduction from multiplication to low-rank approximation is achieved using matrices that 
can be represented with just $O(\log (n+d))$ bits per entry. %This will be true for all our lower bound constructions.

Theorem  \ref{thm:introLower} shows that the runtime of $\tilde O(\nnz(A)k + nk^{\omega-1})$ for $\Delta = \sqrt{n}$ achieved by \cite{musco2016recursive} for general kernels cannot be significantly improved without advancing the state-of-the-art in matrix multiplication. Currently no general algorithm is known for multiplying integer $A \in \R^{n \times d}$, $C \in \R^{d \times k}$ in $o(\nnz(A)k)$ time, except when $k \ge n^{\alpha}$ for $\alpha < .314$ and $A$ is dense. In this case, $AC$ can be computed in $O(nd)$ time using fast rectangular matrix multiplication \cite{le2012faster,gall2017improved}. 

As discussed, when $A$ has uniform row sparsity or when $d \le (\sqrt{nk}/\epsilon^2)^{\alpha}$, the runtime of \cite{musco2017sublinear} for $\Delta = (1+\epsilon)$, ignoring $\epsilon$ dependencies and typically lower order terms, is $\tilde O(\nnz(A)k)$, which is also nearly tight. 

In recent work, Backurs et al. \cite{backurs2017fine} give lower bounds for a number of kernel learning problems, including kernel PCA for the Gaussian kernel. However, their strong bound, of $\Omega(n^2)$ time, requires very small error $\Delta = \exp(-\omega(\log^2 n)$, whereas ours applies for any relative error $\Delta$.

\subsubsection{Improved algorithm for radial basis function kernels}

In contrast to the above negative result, we demonstrate that achieving the alternative Kernel PCA  guarantee of \eqref{eq:factorError} is possible in input sparsity time for any shift and rotationally invariant kernel -- e.g., any radial basis function kernel where $\kf(x_i,x_j) = f(\norm{x_i-x_j}$). This result significantly extends the progress of Avron et al. \cite{avron2014subspace} on the polynomial kernel.

Our algorithm is based off a fast implementation of the random Fourier features method \cite{rahimi2007random}, which uses the fact  that that the Fourier transform of any shift invariant kernel is a probability distribution after appropriate scaling (a consequence of Bochner's theorem). Sampling frequencies from this distribution gives an approximation to $\kf(\cdot, \cdot)$ and consequentially the matrix $K$.

%$ uses the fact that for any shift-invariant kernel, $K$ can be expanded into $\Phi D \Phi^T$ where $\Phi$ is a Fourier basis and $D$ is a diagonal, containing the Fourier transform of $k(\cdot, \cdot)$. Sampling frequencies (columns) from $\Phi$ by their weight in the Fourier transform gives an approximation to the true kernel matrix $K$. 
We employ a new analysis of this method \cite{haimKernel}, which shows that sampling $\tilde O \left (\frac{n}{\epsilon^2 \lambda} \right)$ random Fourier features suffices to give $\tilde K = \tilde \Phi \tilde \Phi^T$ satisfying the spectral approximation guarantee: 
$$(1-\epsilon ) (\tilde K + \lambda I) \preceq K + \lambda I \preceq (1+\epsilon ) (\tilde K + \lambda I).$$

If we set $\lambda \le \sigma_{k+1}(K)/k$, we can show that $\tilde \Phi$ also gives a \emph{projection-cost preserving sketch} \cite{cohen2015dimensionality} for the kernelized dataset $\Phi$. This ensures that any $Z$ satisfying $\norm{\tilde \Phi - ZZ^T \tilde \Phi}_F^2 \le (1+\epsilon)\norm{\tilde \Phi - \tilde \Phi_k}_F^2$ also satisfies $\norm{ \Phi - ZZ^T \Phi}_F^2 \le (1+O(\epsilon))\norm{\Phi - \Phi_k}_F^2$ and thus achieves  \eqref{eq:factorError}.

Our algorithm samples $s = \tilde O \left (\frac{n}{\epsilon^2 \lambda} \right ) = \tilde O\left(\frac{nk}{\epsilon^2\sigma_{k+1}(K)}\right)$ random Fourier features, which naively requires $O(\nnz(A) s)$ time. We show that this can be accelerated to $O(\nnz(A)) + \poly(n,s)$ time, using a recent result of Kapralov et al. on fast multiplication by random Gaussian matrices \cite{kapralov2016fake}. Our technique is analogous to the `Fastfood' approach to accelerating random Fourier features using fast Hadamard transforms \cite{le2013fastfood}. However, our runtime scales with $\nnz(A)$, which can be significantly smaller than the $\tilde O(nd)$ runtime given by Fastfood when $A$ is sparse. Our main algorithmic result is:
%. This illustrates a dichotomy between the difficulty of finding a low-rank approximation to $K$ itself and of finding a low-rank approximation to the kernelized dataset $\Phi$.

\begin{theorem}[Input sparsity time kernel PCA]\label{thm:introUpper}
There is an algorithm that given $A \in \R^{n \times d}$ along with shift and rotation-invariant kernel function $\kf: \R^d \times \R^d \rightarrow \R^+$ with $\kf(x,x) = 1$, outputs, with probability $99/100$, $Z\in \R^{n\times k}$ satisfying:
\begin{align*}
\norm{ \Phi - ZZ^T \Phi}_F^2 \le (1+\epsilon)\norm{\Phi - \Phi_k}_F^2
\end{align*}
for any $\Phi$ with $\Phi \Phi^T = K = \{\kf(a_i,a_j) \}$ and any $\epsilon > 0$.
%= $O(\nnz(A)) + \tilde O\left( \frac{n^{3.873} k^{0.873}}{\sigma_{k+1}^{0.873}\epsilon^{1.746}}\right)$
Letting $\sigma_{k+1}$ denote the $(k+1)^{th}$ largest eigenvalue of $K$ and $\omega < 2.373$ be the exponent of fast matrix multiplication, the algorithm runs in $O(\nnz(A)) + \tilde O\left(n^{\omega+ 1.5} \cdot \left(\frac{k}{\sigma_{k+1}\epsilon^2}\right)^{\omega-1.5}\right)$ time.
\end{theorem}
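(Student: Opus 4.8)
The plan is to approximate the (possibly infinite-dimensional) kernelized dataset $\Phi$ by a small number of random Fourier features, to argue that this approximation is a \emph{projection-cost preserving sketch} of $\Phi$ in the sense of \cite{cohen2015dimensionality}, and then to read off $Z$ from an ordinary truncated SVD of the sketch. The only step that is not immediate is carrying out the feature map in $O(\nnz(A))$ time, for which we invoke the fast pseudo-Gaussian multiplication primitive of \cite{kapralov2016fake}.

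\emph{Sketching $\Phi$ and obtaining a spectral approximation.} Since $\kf$ is shift invariant, Bochner's theorem, together with $\kf(x,x)=1$, makes its (scaled) Fourier transform a probability density $p$ on $\R^d$; rotation invariance makes $p$ spherically symmetric, so a sample $w\sim p$ can be written $w=\rho\, g/\norm{g}$ with $g\sim\mathcal{N}(0,I_d)$ and $\rho$ an independent radius (for the Gaussian kernel $p$ is itself Gaussian and no rescaling is needed). Sampling $w_1,\dots,w_s\sim p$ and $b_1,\dots,b_s$ uniform in $[0,2\pi]$ and setting $\tilde\phi(x)=\sqrt{2/s}\,\big(\cos(w_1^Tx+b_1),\dots,\cos(w_s^Tx+b_s)\big)$ yields a sketched dataset $\tilde\Phi\in\R^{n\times s}$ with $\E[\tilde\Phi\tilde\Phi^T]=K$. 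We then quote the analysis of \cite{haimKernel}: with $s=\tilde O(n/(\epsilon^2\lambda))$ features, with high probability $\tilde K:=\tilde\Phi\tilde\Phi^T$ satisfies
\[
(1-\epsilon)(\tilde K+\lambda I)\ \preceq\ K+\lambda I\ \preceq\ (1+\epsilon)(\tilde K+\lambda I).
\]

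\emph{From spectral approximation to $Z$.} Choosing $\lambda=\Theta(\sigma_{k+1}(K)/k)$, the displayed guarantee implies, by the standard argument underlying ridge-leverage-score low-rank approximation \cite{cohen2015dimensionality}, that $\tilde\Phi$ is a rank-$k$ projection-cost preserving sketch of $\Phi$: for some fixed $c\ge 0$ and every orthonormal $U\in\R^{n\times k}$,
\[
\norm{\Phi-UU^T\Phi}_F^2\ \in\ [1,\,1+O(\epsilon)]\cdot\big(\norm{\tilde\Phi-UU^T\tilde\Phi}_F^2+c\big).
\]
We do not need $\sigma_{k+1}(K)$ in advance: by Weyl's inequality the displayed spectral bound gives $\sigma_{k+1}(\tilde K)=(1\pm O(\epsilon))\,\sigma_{k+1}(K)$, so we can start with a large $\lambda$ and halve it until $\lambda\le c'\sigma_{k+1}(\tilde K)/k$ for an appropriate constant $c'$, at a logarithmic overhead dominated by the last round. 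Taking $Z$ to be the top-$k$ left singular vectors of $\tilde\Phi$ (equivalently the top-$k$ eigenvectors of $\tilde K$) minimizes $\norm{\tilde\Phi-UU^T\tilde\Phi}_F^2$ over orthonormal $U$; comparing with the minimizer of $\norm{\Phi-UU^T\Phi}_F^2=\norm{\Phi-\Phi_k}_F^2$ and applying both directions of the PCP inequality gives $\norm{\Phi-ZZ^T\Phi}_F^2\le(1+O(\epsilon))\norm{\Phi-\Phi_k}_F^2$; rescaling $\epsilon$ and the constant in the success probability finishes the correctness claim.

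\emph{Running time and the main obstacle.} Everything above except forming $\tilde\Phi$ costs a fixed polynomial in $n$ and $s$ (the dominant pieces being the $n\times s$ by $s\times n$ product $\tilde\Phi\tilde\Phi^T$ and an $n\times n$ eigendecomposition, performed with fast matrix multiplication). Forming $\tilde\Phi$ requires the matrix $WA^T$, where $W\in\R^{s\times d}$ has i.i.d.\ Gaussian rows, followed by a cheap per-row radial rescaling and an entrywise cosine; computing $WA^T$ directly costs $O(\nnz(A)\,s)$. Instead we compute it via the ``fake Gaussian'' construction of Kapralov et al.\ \cite{kapralov2016fake}, which for fixed $A$ produces, in $O(\nnz(A))+\poly(n,s)$ time, a matrix whose distribution is statistically close to that of a true $WA^T$. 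Since the conclusion of \cite{haimKernel} holds with high probability as a function of $WA^T$, it survives this substitution up to a negligible decrease in the success probability; collecting the polynomial terms with fast (square and rectangular) matrix multiplication and substituting $s=\tilde O\big(nk/(\sigma_{k+1}\epsilon^2)\big)$ yields the stated $O(\nnz(A))+\tilde O\big(n^{\omega+1.5}(k/(\sigma_{k+1}\epsilon^2))^{\omega-1.5}\big)$ bound. The conceptual content here is just an assembly of \cite{haimKernel} and the projection-cost-preserving-sketch machinery of \cite{cohen2015dimensionality}, used as black boxes; the delicate point is verifying that the high-probability analysis of \cite{haimKernel} depends on $W$ only through quantities ($WA^T$, and for a non-Gaussian $p$ the Gaussian row norms $\norm{g_i}$, which concentrate around $\sqrt d$) whose distribution the construction of \cite{kapralov2016fake} reproduces closely enough, and then doing the matrix-multiplication accounting carefully enough to see that the claimed term dominates. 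The search for $\lambda$ and the propagation of the $\epsilon$'s through the reduction are routine.
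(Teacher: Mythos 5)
Your overall architecture is the one the paper uses: sample $s=\tilde O(n/(\epsilon^2\lambda))$ random Fourier features with $\lambda=\Theta(\sigma_{k+1}/k)$, exploit rotation invariance to write the frequencies as a Gaussian matrix times a diagonal radial rescaling, replace the $O(\nnz(A)s)$ product by the fake-Gaussian primitive of \cite{kapralov2016fake} with a total-variation argument, and then extract $Z$ from the sketch; the runtime accounting (the $n^3s^{\omega-1.5}$ term dominating after substituting $s$) is also the paper's. The one step that does not hold as you wrote it is the bridge from the guarantee you quote from \cite{haimKernel} to projection-cost preservation. From $(1-\epsilon)(\tilde K+\lambda I)\preceq K+\lambda I\preceq(1+\epsilon)(\tilde K+\lambda I)$ alone, applying it to $I-P$ for a rank-$k$ projection $P$ only gives $\norm{\Phi-P\Phi}_F^2+\lambda(n-k)\in(1\pm\epsilon)\bigl(\norm{\tilde\Phi-P\tilde\Phi}_F^2+\lambda(n-k)\bigr)$, i.e.\ an additive slack of order $\epsilon\lambda(n-k)\approx\epsilon n\sigma_{k+1}/k$. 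This is not bounded by $\epsilon\norm{\Phi-\Phi_k}_F^2$ in general (take a spectrum whose tail is essentially a single eigenvalue $\sigma_{k+1}$, so the tail is $\sigma_{k+1}$ while the slack is $\epsilon(n/k)\sigma_{k+1}$), so the top-$k$ eigenvectors of $\tilde K$ are only certified near-optimal up to that additive term, not up to a $(1+\epsilon)$ factor. The ``standard argument'' you invoke is not a derivation of PCP from spectral approximation: in \cite{cohen2015ridge,cohen2015dimensionality} projection-cost preservation is proved directly for ridge-leverage-score column sampling, via a head/tail split and matrix concentration, which uses strictly more than the spectral statement. A rescue via eigenvector-perturbation arguments would need to be supplied explicitly, and you have not done so.

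The paper closes exactly this gap by using the other ingredient of \cite{haimKernel}: the ridge leverage function bound $\tau_\lambda(\eta)\le n/\lambda$ (Lemma~\ref{ridgeBound}), which certifies that sampling frequencies from $p(\eta)$ is over-sampling with respect to ridge leverage scores; combined with the continuous-setting extension of the PCP theorem of \cite{cohen2015ridge} for any $\lambda\le\frac{1}{k}\sum_{i>k}\sigma_i(K)$, this gives the two-sided cost-preservation statement (Lemma~\ref{pcp}), with no additive term to absorb, after which comparing approximate minimizers works just as you describe. With that substitution, the remaining pieces of your write-up — the halving search for $\lambda$, the total-variation substitution of the fake Gaussian product, and the matrix-multiplication accounting — match the paper (the paper computes the near-optimal $Q$ from $\tilde Z$ with an input-sparsity low-rank approximation routine rather than an exact eigendecomposition of $\tilde K$, a cosmetic difference at the stated runtime). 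Two smaller points: the analysis in \cite{haimKernel} and Lemma~\ref{ridgeBound} are stated for the complex exponential features $z(\eta)_j=e^{-2\pi i\eta^Ta_j}$, so either use those (as the paper does) or verify the phase-shifted cosine variant separately; and your claim that the argument depends on the Gaussian matrix only through $WA^T$ and row norms is exactly the kind of statement the total-variation bound of \cite{kapralov2016fake} is meant to make unnecessary — the paper simply bounds the TV distance between the two distributions of $\tilde Z$ and adds it to the failure probability.
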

We note that the runtime of our algorithm is $O(\nnz(A))$ whenever $n$, $k$, $1/\sigma_{k+1}$, and $1/\epsilon$ are not too large. Due to the relatively poor dependence on $n$, the algorithm is relevant for very high dimensional datasets with $d \gg n$. Such datasets are found often, e.g., in genetics applications \cite{hedenfalk2001gene,javed2011efficient}. While we have dependence on $1/\sigma_{k+1}$, in the natural setting, we only compute a low-rank approximation up to an error threshold, ignoring very small eigenvalues of $K$, and so $\sigma_{k+1}$ will not be too small. We do note that if we apply Theorem \ref{thm:introUpper} to the low-rank approximation instances given by our lower bound construction, $\sigma_{k+1}$ can be very small, $\le 1/\poly(n,d)$ for matrices with $\poly(n)$ bounded entries. Thus, removing this dependence is an important open question in understanding the complexity of low-rank kernel approximation.

We leave open the possibility of improving our algorithm, achieving $O(\nnz(A)) + n \cdot \poly(k,\epsilon)$ runtime, which would match the state-of-the-art for low-rank approximation of non-kernelized matrices \cite{clarkson2013low}. Alternatively, it is possible that a lower bound can be shown, proving the that high $n$ dependence, or the $1/\sigma_{k+1}$ term are required even for the Kernel PCA guarantee of \eqref{eq:factorError}.
%\begin{table}[H]
%\centering
%\caption{Our lower bounds compared to the state-of-the-art upper bounds. For conciseness we hide logarithmic and $\epsilon$ dependencies in the stated runtimes. Our lower bounds are conditional on the assumption that multiplying $A \in \R^{n \times d}$ by $C \in R^{d \times k}$ requires $\Omega(\nnz(A)k)$ time.}
% \label{resultsTable}
%  \begin{tabular}{cccc}
%  \toprule \\
%  Approximation Bound & Kernel & Upper Bound & Lower Bound.\\
%  \cmidrule{1-4}\\
%    $\norm{K-\tilde K}_F \le \sqrt{n} \norm{K-K_k}_F$ & any & \specialcell{$\tilde O(\nnz(A)k +nk^{\omega-1})$\\ \cite{musco2016recursive}} & $\Omega(\nnz(A)k)$\\
%  \cmidrule{1-4}\\
%  $\norm{K-\tilde K}_F \le (1+\epsilon) \norm{K-K_k}_F$, any $\epsilon$ & any & $\tilde O(\nnz(A)k) + O(nk^{\omega-1})$ & c\\
%  \bottomrule
%  \end{tabular}
%%As discussed in Appendix \ref{sec:example}, the bound given by \cite{mm16} for $\ha$ may translate to arbitrary bad low-rank approximation to $A$ itself. The bound given by \cite{drineas2005nystrom} 
%\end{table}

\section{Lower bounds}\label{sec:lower}

Our lower bound proof argues that for a broad class of kernels, given input $M$, a low-rank approximation of the associated kernel matrix $K$ achieving \eqref{eq:relError} can be used to obtain a close approximation to the Gram matrix $MM^T$. We write $\kf(m_i,m_j)$ as a function of $m_i^Tm_j$ (or $\norm{m_i-m_j}^2$ for distance kernels) and expand this function as a power series. We show that the if input points are appropriately rescaled, the contribution of degree-1 term $m_i^Tm_j$ dominates, and hence our kernel matrix approximates $MM^T$, up to some easy to compute low-rank components.

We then show that such an approximation can be used to give a fast algorithm for multiplying any two integer matrices $A \in \R^{n \times d}$ and $C \in \R^{d \times k}$. The key idea is to set $M = [A, wC]$ where $w$ is a large weight. We then have:
\begin{align*}
MM^T = \begin{bmatrix}
A A^T & w A C\\
w C^T A^T & w^2 C^T C
\end{bmatrix}.
\end{align*}
Since w is very large, the $AA^T$ block is relatively very small, and so $MM^T$ is nearly rank-$2k$ -- it has a `heavy' strip of elements in its last $k$ rows and columns. Thus, computing a relative-error rank-$2k$ approximation to $MM^T$ recovers all entries except those in the $AA^T$ block very accurately, and importantly, recovers the $wAC$ block and so the product $AC$.

\subsection{Lower bound for low-rank approximation of $MM^T$.}

We first illustrate our lower bound technique by showing hardness of direct approximation of $MM^T$.

\begin{theorem}[Hardness of low-rank approximation for $MM^T$]\label{initialHardness} Assume there is an algorithm $\algA$ which given any $M \in \R^{n \times d}$ returns $N \in \R^{n \times k}$ such that $\norm{MM^T - NN^T}_F^2 \le \Delta_1 \norm{MM^T-(MM^T)_k}_F^2$ in $T(M,k)$ time for some approximation factor $\Delta_1$.

For any $A \in \R^{n \times d}$ and $C \in \mathbb{R}^{d \times k}$ each with integer entries in $[-\Delta_2,\Delta_2]$, let $B = [A^T, w C]^T$ where $w = 3\sqrt{\Delta_1}\Delta_2^2 nd$. It is possible to compute the product $AC$ in time $T(B,2k) + O(nk^{\omega-1})$.
\end{theorem}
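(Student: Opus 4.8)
The plan is to read off $AC$ from the off-diagonal block of a rank-$2k$ approximation of $BB^T$, using the weight $w$ to make the only block that the approximation is allowed to ruin — namely $AA^T$ — negligibly small. Writing $B = [A^T, wC]^T = \begin{bmatrix} A \\ wC^T \end{bmatrix}$ we have
\[
BB^T = \begin{bmatrix} AA^T & wAC \\ wC^TA^T & w^2C^TC \end{bmatrix},
\]
which agrees, except on its leading $n\times n$ block, with the matrix $R$ obtained by replacing that block with zeros. Since $R$ is supported on its last $k$ rows and last $k$ columns, $\rank(R)\le 2k$, so by Eckart--Young $\norm{BB^T-(BB^T)_{2k}}_F \le \norm{BB^T-R}_F = \norm{AA^T}_F$. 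First I would bound $\norm{AA^T}_F \le \norm{A}_F^2 \le nd\Delta_2^2$ from the entrywise bound on $A$, giving $\norm{BB^T-(BB^T)_{2k}}_F^2 \le (nd\Delta_2^2)^2$.

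Next, run $\algA$ on input $(B,2k)$ to obtain $N\in\R^{(n+k)\times 2k}$ with $\norm{BB^T-NN^T}_F^2 \le \Delta_1 (nd\Delta_2^2)^2$. Since the squared Frobenius error bounds each squared entry, every entry of $NN^T$ is within $\sqrt{\Delta_1}\,nd\Delta_2^2 = w/3$ of the corresponding entry of $BB^T$. Now I would focus on the off-diagonal block: for $i\le n$ and $n< j\le n+k$, $(BB^T)_{i,j} = w\,(AC)_{i,j-n}$, and because $A$ and $C$ have integer entries, $(AC)_{i,j-n}\in\mathbb{Z}$. Hence $(NN^T)_{i,j}/w$ lies within $1/3<1/2$ of the integer $(AC)_{i,j-n}$, so rounding $(NN^T)_{i,j}/w$ to the nearest integer recovers $(AC)_{i,j-n}$ exactly; doing this for all $i\in[n]$ and $j-n\in[k]$ yields $AC$.

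Finally I would account for the running time. Forming and scaling $B$ costs $O(\nnz(A)+dk)$, which is dominated by the cost $T(B,2k)$ of the call to $\algA$ (and $w$ has $O(\log(\Delta_1\Delta_2 nd))$ bits, so arithmetic on it is $O(1)$ in the real-RAM model). It then remains to extract the $n\times k$ block of $NN^T$ indexed by rows $[n]$ and columns $\{n+1,\dots,n+k\}$: this is the product of the first $n$ rows of $N$ (an $n\times 2k$ matrix) with the transpose of the last $k$ rows of $N$ (a $2k\times k$ matrix), which by blocking into $O(n/k)$ products of $O(k)\times O(k)$ matrices takes $O(nk^{\omega-1})$ time. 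Summing gives the claimed $T(B,2k)+O(nk^{\omega-1})$.

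The step requiring the most care is the quantitative calibration of $w = 3\sqrt{\Delta_1}\Delta_2^2 nd$: it must be large enough that the multiplicative error $\sqrt{\Delta_1}\,\norm{BB^T-(BB^T)_{2k}}_F$ inflicted on the $wAC$ block stays below $w/2$ (so rounding is correct), yet polynomially bounded so that $B$ retains $O(\log(n+d))$-bit entries. Everything else is bookkeeping around Eckart--Young and the rank bound on $R$; the conceptual content is simply that weighting makes $AA^T$ — the one part of $BB^T$ the approximation may distort arbitrarily — irrelevant at scale $w$, while the target entries $w(AC)_{i,l}$ become well-separated integer multiples of $w$.
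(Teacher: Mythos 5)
Your proposal is correct and follows essentially the same route as the paper: zero out the $AA^T$ block to exhibit a rank-$2k$ matrix certifying $\norm{BB^T-(BB^T)_{2k}}_F \le \norm{AA^T}_F \le nd\Delta_2^2$, convert the Frobenius guarantee into an entrywise error of $w/3$, and round the off-diagonal block to the nearest multiple of $w$ to recover $AC$, with the $O(nk^{\omega-1})$ cost for extracting that block of $NN^T$. The only differences are cosmetic (Eckart--Young applied to the zeroed-block matrix rather than the paper's explicit orthogonal span $Q$, and your explicit blocking argument for the fast-multiplication step), so nothing further is needed.
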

\begin{proof}
We can write the $(n + k) \times (n + k)$ matrix $BB^T$ as:
\begin{align*}
BB^T = [A^T, w C]^T [A, w C] = 
\begin{bmatrix}
A A^T & w A C\\
w C^T A^T & w^2 C^T C
\end{bmatrix}.
\end{align*}
Let $Q \in \mathbb{R}^{n \times 2k}$ be an orthogonal span for the columns of the $n \times 2k$ matrix:
\begin{align*}
\begin{bmatrix}
0 & w A C\\
V & w^2 C^T C
\end{bmatrix}
\end{align*}
where $V \in \R^{k \times k}$ spans the columns of $w C^T A^T \in \R^{k \times n}$. The projection $QQ^T BB^T$ gives the best Frobenius norm approximation to $BB^T$ in the span of $Q$. We can see that:
\begin{align}\label{AAbondy}
\norm{BB^T - (BB^T)_{2k}}_F^2 \le \norm{BB^T - QQ^TBB^T}_F^2  \le \left \| \begin{bmatrix}
A A^T & 0\\
0 & 0
\end{bmatrix} \right \|_F^2 \le \Delta_2^4 n^{2} d^2
\end{align}
since each entry of $A$ is bounded in magnitude by $\Delta_2$ and so each entry of $AA^T$ is bounded by $d\Delta_2^2$.

Let $N$ be the matrix returned by running $\algA$ on $B$ with rank $2k$.
In order to achieve the approximation bound of $\norm{BB^T - NN^T}_F^2 \le \Delta_1 \norm{BB^T-(BB^T)_{2k}}_F^2$ we must have, for all $i,j$:
\begin{align*}
(BB^T - NN^T)_{i,j}^2 \le \norm{BB^T - NN^T}_F^2 \le \Delta_1 \Delta_2^4 n^{2}d^2
\end{align*}
where the last inequality is from \eqref{AAbondy}. This gives $|BB^T - NN^T|_{i,j} \le \sqrt{\Delta_1} \Delta_2^2 nd$.
Since $A$ and $C$ have integer entries, each entry in the submatrix $w AC$ of $BB^T$ is an integer multiple of $w = 3\sqrt{\Delta_1} \Delta_2^2 nd$. Since $(NN^T)_{i,j}$ approximates this entry to error $\sqrt{\Delta_1} \Delta_2^2 n d$, by simply rounding $(NN^T)_{i,j}$ to the nearest multiple of $w$, we obtain the entry exactly. Thus, given $N$, we can exactly recover $A C$ in $O(nk^{\omega-1})$ time by computing the $n \times k$ submatrix corresponding to $AC$ in $BB^T$.
\end{proof}
Theorem \ref{initialHardness} gives our main bound Theorem \ref{thm:introLower} for the case of the linear kernel $\kf(m_i,m_j) = m_i^T m_j$.
\begin{proof}[Proof of Theorem \ref{thm:introLower} -- Linear Kernel]
We apply Theorem \ref{initialHardness}  after noting that for $B = [A^T, wC]^T$, $\nnz(B) \le \nnz(A) +nk$ and so $T(B,2k) = o(\nnz(A)k) + O(nk^2).$
\end{proof}

We show in Appendix \ref{AAalgo} that there is an algorithm which nearly matches the lower bound of Theorem \ref{thm:introLower} for any $\Delta = (1+\epsilon)$ for any $\epsilon > 0$. Further,
in Appendix \ref{subspaceHardness} we show that even just outputting an orthogonal matrix $Z \in \mathbb{R}^{n \times k}$ such that $\tilde K = ZZ^T MM^T$ is a relative-error low-rank approximation of $MM^T$, but not computing a factorization of $\tilde K$ itself, is enough to give fast multiplication of integer matrices $A$ and $C$. %It is known that a near optimal low-rank approximation to $M$ can be computing in $O(nnz(M))$ time.

\subsection{Lower bound for dot product kernels}
We now extend Theorem \ref{initialHardness} to general dot product kernels -- where $\kf(a_i, a_j) = f(a_i^T a_j)$ for some function $f$. This includes, for example, the polynomial kernel.

\begin{theorem}[Hardness of low-rank approximation for dot product kernels]\label{thm:polyLB}
Consider any kernel $\kf: \mathbb{R}^d \times \R^d \rightarrow \R^+$ with $\kf(a_i, a_j) = f(a_i^T a_j)$ for some function $f$ which can be expanded as $f(x) = \sum_{q=0}^\infty c_q x^q$ with $c_1 \neq 0$ and $|c_q/c_1| \le G^{q-1}$ and for all $q \ge 2$ and some $G \ge 1$.

Assume there is an algorithm $\algA$ which given $M \in \R^{n \times d}$ with kernel matrix $K = \{\kf(m_i,m_j)\}$, returns $N \in \R^{n \times k}$ satisfying $\norm{K - NN^T}_F^2 \le \Delta_1 \norm{K - K_k}$ in $T(M,k)$ time.

For any $A \in \R^{n \times d}$, $C \in \R^{d \times k}$ with integer entries in $[-\Delta_2,\Delta_2]$, let $B = [w_1 A^T, w_2 C]^T$ with $w_1 = \frac{w_2}{12 \sqrt{\Delta_1} \Delta_2^2 n d}$, $w_2 = \frac{1}{4 \sqrt{Gd} \Delta_2}$. Then it is possible to compute $A C$ in time $T(B,2k+1) + O(nk^{\omega-1}).$
\end{theorem}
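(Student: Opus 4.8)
The plan is to mimic the proof of Theorem~\ref{initialHardness}, but now accounting for the fact that the kernel matrix $K = \{f(b_i^T b_j)\}$ is not exactly $BB^T$ but rather a power series in the entries of $BB^T$. First I would write $K = \sum_{q=0}^\infty c_q (BB^T)^{\circ q}$, where $(BB^T)^{\circ q}$ denotes the entry-wise $q$-th power. The $q=0$ term is $c_0 \mathbf{1}\mathbf{1}^T$, a rank-one matrix; the $q=1$ term is $c_1 BB^T$, which is exactly what we want (up to scaling), and is itself nearly rank-$2k$ by the same block argument as before; and the $q \ge 2$ terms are the ``error.'' The choice $w_2 = \frac{1}{4\sqrt{Gd}\Delta_2}$ is designed to make the entries of $BB^T$ small enough (in magnitude roughly $\le 1/(16 G)$ after accounting for $d$ and $\Delta_2$) that $\sum_{q \ge 2} |c_q| \cdot |(BB^T)_{ij}|^q$ is dominated by $|c_1| \cdot |(BB^T)_{ij}|$, using the bound $|c_q/c_1| \le G^{q-1}$ to sum the geometric tail. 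So $E \eqdef K - c_0 \mathbf{1}\mathbf{1}^T - c_1 BB^T$ has tiny entries, hence tiny Frobenius norm.

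Next I would bound $\norm{K - K_{2k+1}}_F^2$. Writing $L \eqdef c_0 \mathbf{1}\mathbf{1}^T + c_1 QQ^T BB^T$ where $Q$ is the $2k$-dimensional orthogonal span from Theorem~\ref{initialHardness} (spanning the ``heavy'' last-$k$-rows/columns structure of $BB^T$), $L$ has rank at most $2k+1$. Then $\norm{K - K_{2k+1}}_F^2 \le \norm{K - L}_F^2 \le \left( \norm{c_1(BB^T - QQ^TBB^T)}_F + \norm{E}_F \right)^2$, and the first term is at most $|c_1|\Delta_2^2 w_1^2 \cdot (\text{something like } nd)$ from the block bound $\norm{BB^T - QQ^TBB^T}_F \le \norm{w_1^2 AA^T \oplus 0}_F$, while $\norm{E}_F$ is negligible by the previous paragraph. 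So $\norm{K - K_{2k+1}}_F^2$ is at most roughly $|c_1|^2 \Delta_2^4 w_1^4 n^2 d^2$ (up to small constants).

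Then, running $\algA$ on $B$ with rank $2k+1$ yields $N$ with $\norm{K - NN^T}_F^2 \le \Delta_1 \norm{K - K_{2k+1}}_F^2$, so every entry satisfies $|K_{ij} - (NN^T)_{ij}| \le \sqrt{\Delta_1}\, |c_1| \Delta_2^2 w_1^2 n d$ (again up to constants). Now I would look at the off-diagonal block of $K$ corresponding to indices $(i, n+\ell)$: there $K_{i,n+\ell} = f(w_1 w_2 (AC)_{i\ell}) = c_0 + c_1 w_1 w_2 (AC)_{i\ell} + E_{i,n+\ell}$. Subtracting the known $c_0$, subtracting the tiny known-to-be-small (but \emph{not} exactly known) $E$ term --- this is the subtle point --- and dividing by $c_1 w_1 w_2$, I recover $(AC)_{i\ell}$ up to additive error. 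The $E_{i,n+\ell}$ contribution must be folded into the rounding budget: the total error in recovering $(AC)_{i\ell}$ is $\frac{1}{|c_1| w_1 w_2}\left( \sqrt{\Delta_1}|c_1|\Delta_2^2 w_1^2 nd + |E_{i,n+\ell}| \right)$, and the constants in $w_1, w_2$ (the $12$ and the $4$) are chosen precisely so that this is $< 1/2$, whence rounding to the nearest integer recovers $(AC)_{i\ell}$ exactly. The running time is $T(B, 2k+1)$ for the call to $\algA$ plus $O(nk^{\omega-1})$ to form the relevant $n \times k$ submatrix of $NN^T$, exactly as in Theorem~\ref{initialHardness}.

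The main obstacle I anticipate is the bookkeeping around the error matrix $E$: unlike in the linear-kernel case, $E$ is nonzero and, crucially, we do not know it exactly (it depends on $AC$, $AA^T$, $C^TC$ in a complicated nonlinear way), so it cannot simply be subtracted off before rounding --- it has to be bounded in two different ways simultaneously, as a Frobenius-norm contribution to $\norm{K - K_{2k+1}}_F$ (affecting the approximation guarantee) and as an entry-wise contribution to the rounding error in each target entry. Getting the constants to line up so that both the Frobenius-norm tail and the per-entry tail are small enough --- and verifying that the geometric series $\sum_{q\ge 2} G^{q-1}|(BB^T)_{ij}|^{q-1}$ converges comfortably below $1$ for the given $w_2$ --- is where the care is needed; everything else is a direct transcription of the linear-kernel argument with $B$ rescaled by $w_1$ and an extra rank-one ($c_0 \mathbf{1}\mathbf{1}^T$) term absorbed by going from rank $2k$ to rank $2k+1$.
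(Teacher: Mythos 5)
Your overall plan follows the paper's: expand $K$ entrywise as $c_0\mathbf{1}\mathbf{1}^T + c_1 BB^T + \sum_{q\ge 2} c_q K^{(q)}$, use the choice of $w_2$ to make the $q\ge 2$ tail entrywise dominated via the geometric series, run $\algA$ with rank $2k+1$, and round entries of $NN^T$ in the $AC$ block. But there is a genuine gap in your bound on $\norm{K-K_{2k+1}}_F$. You compare $K$ to $L = c_0\mathbf{1}\mathbf{1}^T + c_1 QQ^TBB^T$, with $Q$ the span from Theorem \ref{initialHardness} built from the blocks of $BB^T$, and you assert that $\norm{E}_F$ (where $E = \sum_{q\ge2}c_qK^{(q)}$) is negligible. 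It is not. The entrywise bound on $E$ is \emph{relative}: $|E_{i,j}| \le \frac{c_1}{12}|b_i^Tb_j|$, so on the last $k$ rows and columns of $K$ -- exactly the ``heavy'' strip -- the entries of $E$ are of order $c_1 w_1 w_2 d\Delta_2^2$ (in the $AC$ blocks) and $c_1 w_2^2 d\Delta_2^2$ (in the $C^TC$ block). These must be compared to the bound you need, which is of order $c_1 w_1^2 d \Delta_2^2 n$ so that multiplying by $\sqrt{\Delta_1}$ yields entrywise error $\le c_1 w_1 w_2/6$. Since $w_2/w_1 = 12\sqrt{\Delta_1}\Delta_2^2 nd$, the $C^TC$-block contribution to $\norm{E}_F$ alone exceeds the allowed residual by a factor of roughly $\Delta_1\Delta_2^4 n d^2 k$, so your claimed bound $\norm{K-K_{2k+1}}_F^2 \lesssim c_1^2\Delta_2^4 w_1^4 n^2 d^2$ does not follow from your decomposition; generically (e.g.\ whenever $c_2\neq 0$) it is simply false that $K - L$ has small Frobenius norm.

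The missing idea, which is how the paper proceeds, is to choose the rank-$(2k+1)$ comparison matrix so that it reproduces the heavy rows and columns of $K$ itself, higher-order terms included, rather than those of the linear term $c_1BB^T$. Concretely, let $\bar K$ be $K - c_0\mathbf{1}\mathbf{1}^T$ with its top-left $n\times n$ block zeroed out: $\bar K$ is supported on the last $k$ rows and columns, hence has rank at most $2k$, and taking $Q$ to span the columns of $\bar K$ together with the all-ones vector gives $\norm{K - QQ^TK}_F \le \norm{K - c_0\mathbf{1}\mathbf{1}^T - \bar K}_F$, which is exactly the Frobenius norm of the top-left $n\times n$ block $c_1 w_1^2 AA^T + c_2\hat K^{(2)} + \cdots$, all of whose entries are uniformly $O(c_1 w_1^2 d\Delta_2^2)$. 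With that substitution (and the per-entry rounding bookkeeping carried out against multiples of $c_1 w_1 w_2$ as in the paper's Theorem \ref{thm:polyDist}), the rest of your argument -- the geometric-series control of the tail, the call to $\algA$ with rank $2k+1$, and the $O(nk^{\omega-1})$ recovery of the $AC$ block -- goes through as you describe.
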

\begin{proof}
Using our decomposition of $\kf(\cdot, \cdot)$,
we can write the kernel matrix for $B$ and $\kf$ as:
\begin{align}\label{decomp}
K = c_0 \begin{bmatrix}
1 & 1\\ 1 & 1
\end{bmatrix} + c_1 \begin{bmatrix}
w_1^2 A A^T & w_1w_2 A C\\
w_1w_2 C^T A^T & w_2^2 C^T C
\end{bmatrix} + c_2 K^{(2)} + c_3 K^{(3)}+...
\end{align}
where $K^{(q)}_{i,j} = (b_i^T b_j)^q$ and $1$ denotes the all ones matrix of appropriate size. The key idea is to show that the contribution of the $K^{(q)}$ terms is small, and so any relative-error rank-$(2k+1)$ approximation to $K$ must recover an approximation to $BB^T$, and thus the product $AC$ as in Theorem \ref{initialHardness}.

By our setting of $w_2 = \frac{1}{4 \sqrt{Gd} \Delta_2}$, the fact that $w_1 < w_2$, and our bound on the entries of $A$ and $C$, we have for all $i,j$, $|b_i^T b_j| \le w_2^2 d \Delta_2^2 < \frac{1}{16 G}$. Thus, for any $i,j$, using that $|c_q/c_1| \le G^{q-1}$:
\begin{align}\label{entryBound}
\left |\sum_{q=2}^\infty
c_q K^{(q)}_{i,j} \right | \le c_1 | b_i^T b_j | \cdot \left |\sum_{q=2}^\infty
G^{q-1} | b_i^T b_j | ^{q-1} \right | \le c_1 | b_i^T b_j | \sum_{q=2}^\infty \frac{G^{q-1}}{(16G)^{q-1}} \le \frac{1}{12}c_1 | b_i^T b_j |.
%= c_0 + c_1 (BB^T_{i,j}) + 
\end{align}

Let $\bar{K}$ be the matrix $\left (K - c_0 \begin{bmatrix}
1 & 1\\ 1 & 1
\end{bmatrix}\right )$, with its top right $n \times n$ block set to $0$. $\bar{K}$ just has its last $k$ columns and rows non-zero, so has rank $\le 2k$. Let $Q \in \R^{n \times 2k+1}$ be an orthogonal span for the columns $\bar{K}$ along with the all ones vector of length $n$. Let $N$ be the result of running $\algA$ on $B$ with rank $2k+1$. Then we have:
\begin{align}\label{camk}
\norm{K - NN^T}_F^2 \le \Delta_1 \norm{K - K_{2k+1}}_F^2 &\le 
\Delta_1\norm{K - QQ^T K}_F^2\nonumber\\
&\le \Delta_1 \left \| \begin{bmatrix}
(c_1 w_1^2 AA^T + c_2 \hat{K}^{(2)} + ...)  & 0\\
0 & 0
\end{bmatrix} \right \|_F^2
\end{align}
where $\hat{K}^{(q)}$ denotes the top left $n \times n$ submatrix of ${K}^{(q)}$.
By our bound on the entries of $A$ and \eqref{entryBound}:
\begin{align*}
\left | \left (c_1w_1^2 A A^T + c_2 \hat{K}^{(2)} + c_3 \hat{K}^{(3)}+...\right )_{i,j} \right | &\le 
\frac{13}{12} \left | \left (c_1w_1^2 A A^T \right )_{i,j}\right | \le 2c_1 w_1^2 d \Delta_2^2.
\end{align*}
Plugging back into \eqref{camk} and using $w_1 = \frac{w_2}{12 \sqrt{\Delta_1} \Delta_2^2 n d}$, this gives for any $i,j$:
\begin{align}\label{Nbound}
(K-NN^T)_{i,j} \le \norm{K - NN^T}_F &\le \sqrt{\Delta_1 n^2} \cdot 2 c_1 w_1^2 d \Delta_2^2\nonumber\\
&\le  \frac{\sqrt{\Delta_1} n \cdot 2c_1 d\Delta_2^2}{12 \sqrt{\Delta_1} \Delta_2^2 n d} \cdot w_1 w_2\nonumber\\
&\le \frac{w_1 w_2 c_1}{6} .
\end{align}
Since $A$ and $C$ have integer entries, each entry of $c_1 w_1 w_2 AC$ is an integer multiple of $c_1 w_1 w_2$. By the decomposition of \eqref{decomp} and the bound of \eqref{entryBound}, if we subtract $c_0$ from the corresponding entry of $K$ and round it to the nearest multiple of $c_1 w_1 w_2 $, we will recover the entry of $AC$. By the bound of \eqref{Nbound}, we can likewise round the corresponding entry of $NN^T$. Computing all $nk$ of these entries given $N$ takes time $O(nk^{\omega-1})$, giving the theorem. 
\end{proof}

Theorem \ref{thm:polyLB} lets us lower bound the time to compute a low-rank kernel approximation for any kernel function expressible as a reasonable power expansion of $a_i^T a_j$. As a straightforward example, it gives the lower bound for the polynomial kernel of any degree stated in Theorem \ref{thm:introLower}.
\begin{proof}[Proof of Theorem \ref{thm:introLower} -- Polynomial Kernel]
We apply Theorem \ref{thm:polyLB}, noting that $\kf(m_i,m_j) = (c+m_i^T m_j)^q$ can be written as $f(m_i^T m_j)$ where $f(x) = \sum_{j= 0}^q c_j x^j$ with $c_j = c^{q-j} {q \choose j}$. Thus $c_1 \neq 0$ and $|c_j/c_1| \le G^{j-1}$ for $G = (q/c)$. Finally note that $\nnz(B) \le \nnz(A) + nk$ giving the result.
\end{proof}

\subsection{Lower bound for distance kernels}
We finally extend Theorem \ref{thm:polyLB} to handle kernels like the Gaussian kernel whose value depends on the squared distance $\norm{a_i-a_j}^2$ rather than just the dot product $a_i^T a_j$. We prove:
\begin{theorem}[Hardness of low-rank approximation for distance kernels]\label{thm:polyDist}
Consider any kernel function $\kf: \mathbb{R}^d \times \R^d \rightarrow \R^+$ with $\kf(a_i, a_j) = f(\norm{a_i-a_j}^2)$ for some function $f$ which can be expanded as $f(x) = \sum_{q=0}^\infty c_q x^q$ with $c_1 \neq 0$ and $|c_q/c_1| \le G^{q-1}$ and for all $q \ge 2$ and some $G \ge 1$.

Assume there is an algorithm $\algA$ which given input $M \in \R^{n \times d}$ with kernel matrix $K = \{\kf(m_i,m_j)\}$, returns $N \in \R^{n \times k}$ satisfying $\norm{K - NN^T}_F^2 \le \Delta_1 \norm{K - K_k}$ in $T(M,k)$ time.

For any $A \in \R^{n \times d}$, $C \in \R^{d \times k}$ with integer entries in $[-\Delta_2,\Delta_2]$, let $B = [w_1 A^T, w_2 C]^T$ with $w_1 = \frac{w_2}{36 \sqrt{\Delta_1} \Delta_2^2 n d}$, $w_2 = \frac{1}{(16 G d^2 \Delta_2^4)(36 \sqrt{\Delta_1} \Delta_2^2 n d)}$. It is possible to compute $A C$ in $T(B,2k+3) + O(nk^{\omega-1})$ time.
\end{theorem}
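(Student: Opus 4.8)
The plan is to run the argument of Theorem~\ref{thm:polyLB} essentially verbatim, replacing the dot-product expansion by the distance expansion $\norm{b_i-b_j}^2 = \norm{b_i}^2 + \norm{b_j}^2 - 2b_i^Tb_j$ and absorbing the extra ``norm'' terms it produces either into an explicit low-rank span or into a small residual. Concretely, set $B = [w_1 A^T, w_2 C]^T$ with the stated weights, let $b_i$ denote the $i$-th row of $B$, and expand $f$ as a power series to get
\[
K = c_0\, \mathbf{1}\mathbf{1}^T + c_1\left(v\mathbf{1}^T + \mathbf{1}v^T\right) - 2 c_1 BB^T + \sum_{q\ge 2} c_q K^{(q)}, \qquad v_i := \norm{b_i}^2,\ \ K^{(q)}_{i,j} := \left(\norm{b_i-b_j}^2\right)^q,
\]
where $BB^T$ has exactly the block structure of Theorem~\ref{initialHardness}, with the off-diagonal block $w_1 w_2 AC$ we want to recover. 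Since $w_1 < w_2$ and $A,C$ have entries bounded by $\Delta_2$, every $\norm{b_i}^2 \le w_2^2 d\Delta_2^2$, hence $\norm{b_i-b_j}^2 \le 4 w_2^2 d\Delta_2^2$ for all $i,j$.

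The first real step is to check that the weights kill the $q\ge2$ terms where it matters. With $w_2 = \frac{1}{(16 G d^2\Delta_2^4)(36\sqrt{\Delta_1}\Delta_2^2 nd)}$ one gets $4G\norm{b_i-b_j}^2 < \tfrac12$, so $\sum_{q\ge2} c_q K^{(q)}_{i,j}$ is a geometric tail dominated by its first term; at a cross-block entry $(i,n+j)$ it is $O(c_1 G w_2^4 d^2\Delta_2^4)$, and the second factor in $w_2$ is exactly what forces this to be $\ll c_1 w_1 w_2$, the spacing at which we will eventually round. This is the one place the distance case genuinely differs from Theorem~\ref{thm:polyLB}: the degree-$q$ term spawns pieces such as $\norm{b_i}^{2q}$ and $\norm{b_i}^2 b_i^Tb_j$ that are \emph{not} supported on the top-left $n\times n$ block, and since there is one new such vector per degree they cannot all be placed in a constant-dimensional span — they must instead be made pointwise smaller than the rounding granularity. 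I expect verifying this is the main obstacle.

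Next, following the pattern of Theorem~\ref{thm:polyLB}, let $\bar K$ be $K - c_0\mathbf{1}\mathbf{1}^T - c_1(v\mathbf{1}^T + \mathbf{1}v^T)$ with its top-left $n\times n$ block zeroed; $\bar K$ is supported on its last $k$ rows and columns so $\rank(\bar K)\le 2k$, and I would take $Q\in\R^{(n+k)\times(2k+3)}$ to be an orthonormal span for the columns of $\bar K$ together with the all-ones vector and the squared-norm vector $v$. Then $QQ^T$ fixes $c_0\mathbf{1}\mathbf{1}^T$, $c_1(v\mathbf{1}^T+\mathbf{1}v^T)$, and $\bar K$ exactly, so $K - QQ^TK = (I-QQ^T)\tilde E$ where $\tilde E$ is supported on the top-left block with $\tilde E_{i,j} = -2c_1 w_1^2(AA^T)_{i,j} + \sum_{q\ge2}c_q K^{(q)}_{i,j}$; since $\norm{b_i}^2\le w_1^2 d\Delta_2^2$ there, $|\tilde E_{i,j}| = O(c_1 w_1^2 d\Delta_2^2)$ and hence $\norm{K - K_{2k+3}}_F \le \norm{\tilde E}_F = O(c_1 w_1^2 d\Delta_2^2 n)$. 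Running $\algA$ at rank $2k+3$ produces $N$ with $\norm{K - NN^T}_F \le \sqrt{\Delta_1}\norm{K - K_{2k+3}}_F$, and the choice $w_1 = \frac{w_2}{36\sqrt{\Delta_1}\Delta_2^2 nd}$ makes this, and hence every entry of $K - NN^T$, $\ll c_1 w_1 w_2$.

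Finally I would recover $AC$ by rounding: for each $i\in[n]$, $j\in[k]$, compute $(NN^T)_{i,n+j} - c_0 - c_1(\norm{b_i}^2 + \norm{b_{n+j}}^2)$ — all three subtracted quantities are known from $B$ — which by the two estimates above equals $-2c_1 w_1 w_2 (AC)_{i,j}$ up to additive error $\ll c_1 w_1 w_2$; rounding to the nearest multiple of $2c_1 w_1 w_2$ and dividing yields $(AC)_{i,j}$ exactly. Computing this $n\times k$ block of $NN^T$ from the width-$(2k+3)$ factor $N$ costs $O(nk^{\omega-1})$ by blocking, giving the total $T(B,2k+3) + O(nk^{\omega-1})$ as claimed.
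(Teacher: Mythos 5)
Your proposal is correct and follows essentially the same route as the paper's proof: the same expansion $K = c_0\mathbf{1}\mathbf{1}^T + c_1(v\mathbf{1}^T+\mathbf{1}v^T) - 2c_1BB^T + \sum_{q\ge 2}c_qK^{(q)}$, the same span (all-ones vector, squared-norm vector, and the last-$k$-rows/columns part of $K$) to bound $\norm{K-K_{2k+3}}_F$ by the top-left residual, the same pointwise bounds on the $q\ge 2$ tail at the cross block driven by the choice of $w_2$, and the same subtract-and-round recovery of $AC$. The only differences are cosmetic (rounding to multiples of $2c_1w_1w_2$ rather than $c_1w_1w_2$, and stating some bounds asymptotically where the paper tracks explicit constants), and your quantitative checks match the paper's.
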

The proof of Theorem \ref{thm:polyDist} is similar to that of Theorem \ref{thm:polyLB}, and relegated to Appendix \ref{additionalProofs}. The key idea is to write $K$ as a polynomial in the \emph{distance matrix} $D$ with $D_{i,j} = \norm{b_i-b_j}_2^2$. Since $ \norm{b_i-b_j}_2^2 =  \norm{b_i}_2^2 +\norm{b_j}_2^2 - 2 b_i^T b_j$, $D$ can be written as $-2BB^T$ plus a rank-2 component. By setting $w_1,w_2$ sufficiently small, as in the proof of Theorem \ref{thm:polyLB}, we ensure that the higher powers of $D$ are negligible, and thus that our low-rank approximation must accurately recover the submatrix of $BB^T$ corresponding to $AC$.
Theorem \ref{thm:polyDist} gives Theorem \ref{thm:introLower} for the popular Gaussian kernel: %A similar technique, also proven in Appendix \ref{additionalProofs} gives the result for Laplace kernels.

\begin{proof}[Proof of Theorem \ref{thm:introLower} -- Gaussian Kernel]
$\kf(m_i,m_j)$ can be written as $f(\norm{m_i-m_j}^2)$ where $f(x) = e^{-x/\sigma} = \sum_{q=0}^\infty \frac{(-1/\sigma)^q}{q!} x^q$. Thus $c_1 \neq 0$ and $|c_q/c_1| \le G^{q-1}$ for $G = 1/\sigma$. Applying Theorem \ref{thm:polyDist} and bounding $\nnz(B) \le \nnz(A) + nk$, gives the result.
\end{proof}
%\begin{proof}[Proof of Theorem \ref{thm:introLower} -- Laplace kernel]
%$k(m_i,m_j)$ can be written as $f(\norm{m_i-m_j}^2)$ where $f(x) = e^{-\sqrt{x}/\sigma}$. For integer $A,C$ bounded in $[-\Delta_2,\Delta_2]$, multiply $A$ and $C$ each by a factor of $\Delta_2$. Since

%\begin{align*}
%f(x) = e^{-\sqrt{x}/\sigma} &= \sum_{q=0}^\infty \frac{(-1/\sigma)^q}{q!} \sqrt{x}^q\\
%&= \left (1 + \frac{(1/\sigma)^2 x}{2!} + \frac{(1/\sigma)^4 x^2}{4!} + ... \right)  - \sqrt{x} \cdot \left ((1/\sigma) + \frac{(1/\sigma)^3 x}{3!} + \frac{(1/\sigma)^5 x^2}{5!} + ... \right)
%\end{align*}
%Focusing on the second term, we can expand $\sqrt{x}$ using the Taylor series at $1$ which gives:
%\begin{align*}
%\sqrt{x} %\cdot \left ((1/\sigma) + \frac{(1/\sigma)^3 x}{3!} + \frac{(1/\sigma)^5 x^2}{5!} + ... \right) = 
%%\left (1 + \frac{(x-1)}{2} - \frac{(x-1)^2}{8} \right )\left ((1/\sigma) + \frac{(1/\sigma)^3 x}{3!} + \frac{(1/\sigma)^5 x^2}{5!} + ... \right)
% = \left ( \sum_{q=0}^\infty \frac{(-1)^q (2q)! (x-1)^q}{(1-2q)(q!)^2(4^q)} \right ). %\left (\sum_{q=1}^\infty \frac{(1/\sigma)^{(2q-1)} x^q}{(2q-1)!} \right )
%\end{align*}
%The constant coefficient in this series is $1 - \frac{1}{2} - \frac{1}{8} - .... \neq 0$. The coefficient on $x$ is $\sum_{q=0}^\infty \frac{(2q)! q}{(2q-1)(q!)^2 4^q} $
%\end{proof}

\section{Input sparsity time kernel PCA for radial basis kernels}
Theorem \ref{thm:introLower} gives little hope for achieving $o(\nnz(A)k)$ time for low-rank kernel approximation. However, the guarantee of \eqref{eq:relError} is not the only way of measuring the quality of $\tilde K$. Here we show that for shift/rotationally invariant kernels, including e.g., radial basis kernels, input sparsity time can be achieved for the kernel PCA goal of \eqref{eq:factorError}.

\subsection{Basic algorithm}

Our technique is based on the random Fourier features technique \cite{rahimi2007random}. Given any shift-invariant kernel, $\kf(x,y) = \kf(x - y)$ with $\kf(0) = 1$ (we will assume this w.l.o.g. as the function can always be scaled), there is a probability density function $p(\eta)$ over vectors in $\R^d$ such that:
\begin{align}\label{FFT}
\kf(x-y) = \int_{\R^d} e^{-2\pi i \eta^T(x -y)} p(\eta) d\eta.
\end{align}
$p(\eta)$ is just the (inverse) Fourier transform of $\kf(\cdot)$, and is a density function by Bochner's theorem.
Informally, given $A \in \R^{n \times d}$ if we let $Z$ denote the matrix with columns $z(\eta)$ indexed by $\eta \in \R^d$. $z(\eta)_{j} = e^{-2\pi i \eta^T a_j}$. Then \eqref{FFT} gives $Z P Z^* = K$ where $P$ is diagonal with $P_{\eta,\eta} = p(\eta)$, and $Z^*$ denotes the Hermitian transpose.

The idea of random Fourier features is to select $s$ frequencies $\eta_1,...,\eta_s$ according to the density $p(\eta)$ and set $\tilde Z = \frac{1}{\sqrt{s}} [z(\eta_1),...z(\eta_s)]$. $\tilde K = \tilde Z \tilde Z^T$ is then used to approximate $K$.

In recent work, Avron et al. \cite{haimKernel} give a new analysis of random Fourier features. Extending prior work on ridge leverage scores in the discrete setting \cite{alaoui2015fast,cohen2015ridge},
they define the \emph{ridge leverage function} for parameter $\lambda > 0$:
\begin{align}\label{ridgeFunction}
\tau_\lambda(\eta) = p(\eta) z(\eta)^* (K+\lambda I)^{-1} z(\eta)
\end{align}
As part of their results, which seek $\tilde K$ that spectrally approximates $K$, they prove the following:
\begin{lemma}\label{ridgeBound} For all $\eta$, $\tau_\lambda(\eta) \le n/\lambda$.
\end{lemma}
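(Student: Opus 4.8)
The plan is a short positive-semidefiniteness estimate. Write $\tau_\lambda(\eta) = p(\eta)\cdot\bigl(z(\eta)^*(K+\lambda I)^{-1}z(\eta)\bigr)$ and bound the two factors separately. For the quadratic form, the only input needed is that $K$ is a kernel (Gram) matrix, hence $K \succeq 0$, so $K + \lambda I \succeq \lambda I$ and therefore $(K+\lambda I)^{-1} \preceq \lambda^{-1} I$. Consequently
\[
z(\eta)^*(K+\lambda I)^{-1}z(\eta) \;\le\; \frac{1}{\lambda}\norm{z(\eta)}^2 \;=\; \frac{n}{\lambda},
\]
where the last equality is because every coordinate of $z(\eta)$ equals $e^{-2\pi i \eta^T a_j}$, of modulus $1$, so $\norm{z(\eta)}^2 = n$. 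Multiplying by the density weight $p(\eta)\le 1$ then gives $\tau_\lambda(\eta)\le n/\lambda$.

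There is essentially no obstacle here — the lemma is a deliberately crude bound, and the substantive work in \cite{haimKernel} is converting it, via matrix concentration, into a spectral-approximation guarantee. Two remarks are worth making nonetheless. First, the scale-free heart of the statement is the $p$-free inequality $z(\eta)^*(K+\lambda I)^{-1}z(\eta)\le n/\lambda$; this is precisely the quantity a matrix-Bernstein argument needs when the frequencies are sampled from the density $p$, and $\tau_\lambda(\eta)\le n/\lambda$ is the convenient way to record it (the passage to $\tau_\lambda$ uses $p(\eta)\le 1$, valid for the normalized kernels considered; a general density can exceed $1$ pointwise, in which case one simply keeps the $p$-free form). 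Second, the bound is data-independent and essentially tight: being a per-$\eta$ quantity rather than an average, $z(\eta)^*(K+\lambda I)^{-1}z(\eta)$ cannot be improved to the effective dimension $\tr\bigl(K(K+\lambda I)^{-1}\bigr)$ — it can be $\Omega(n/\lambda)$ for $\eta$ in the tail of $p$, where the direction $z(\eta)$ is poorly represented among the bulk of the random features. This tightness is exactly why the feature count $\tilde O\!\left(n/(\epsilon^2\lambda)\right)$, and hence the $1/\sigma_{k+1}$ dependence in Theorem \ref{thm:introUpper}, is not removable by this approach.
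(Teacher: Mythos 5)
Your core estimate is exactly the right (and the standard) one, and it is the entire mathematical content of this lemma: since $K \succeq 0$ we have $(K+\lambda I)^{-1} \preceq \lambda^{-1}I$, and since every coordinate of $z(\eta)$ has modulus $1$, $\norm{z(\eta)}^2 = n$, so $z(\eta)^*(K+\lambda I)^{-1}z(\eta) \le n/\lambda$ for every $\eta$. This paper does not reprove the lemma (it quotes it from \cite{haimKernel}), and the cited argument is the same one-line PSD bound, so on that score you match the source.

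The genuine flaw is the final step $p(\eta) \le 1$. Normalizing the kernel so that $\kf(0)=1$ only makes $p$ a probability density ($\int p(\eta)\,d\eta = 1$); it does not bound $p$ pointwise. For the Gaussian kernel $\kf(x) = e^{-\norm{x}^2/\sigma}$ one has $p(\eta) = (\pi\sigma)^{d/2} e^{-\pi^2\sigma\norm{\eta}^2}$, so $p(0) = (\pi\sigma)^{d/2}$ can be arbitrarily large while $\kf(0)=1$ — your parenthetical claim that $p\le 1$ is ``valid for the normalized kernels considered'' is therefore false, and multiplying through by $p(\eta)$ is not justified. The way to read the lemma (and the way it appears in \cite{haimKernel}) is as the ratio bound $\tau_\lambda(\eta) \le \frac{n}{\lambda}\,p(\eta)$, i.e., precisely the $p$-free inequality you proved; that ratio $\tau_\lambda(\eta)/p(\eta)$ is what the importance-sampling/matrix-concentration argument behind Lemma \ref{pcp} consumes when frequencies are drawn from $p$, and it is what yields the sample size $s = \tilde O(n/(\epsilon^2\lambda))$. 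You in fact say this in your first remark, so the fix is simply to drop the $p(\eta)\le 1$ step and state the conclusion in that form rather than as a pointwise bound on $\tau_\lambda$ itself. Your closing comments on tightness are side commentary and not needed for the lemma.
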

While simple, this bound is key to our algorithm. It was shown in \cite{cohen2015ridge} that if the columns of a matrix are sampled by over-approximations to their ridge leverage scores (with appropriately set $\lambda$), the sample is a projection-cost preserving sketch for the original matrix. That is, it can be used as a surrogate in computing a low-rank approximation. The results of \cite{cohen2015ridge} carry over to the continuous setting giving, in conjunction with Lemma \ref{ridgeBound}:
\begin{lemma}[Projection-cost preserving sketch via random Fourier features]\label{pcp} Consider any $A \in \R^{n \times d}$ and shift-invariant kernel $\kf(\cdot)$ with $\kf(0) = 1$, with associated kernel matrix $K = \{\kf(a_i-a_j)\}$ and kernel Fourier transform $p(\eta)$.
For any $0 < \lambda \le \frac{1}{k}\sum_{i=k+1}^{n} \sigma_i(K)$, let $s = \frac{c n\log(n/\delta\lambda)}{\epsilon^2 \lambda}$ for sufficiently large $c$ and let $\tilde Z = \frac{1}{\sqrt{s}} [ z(\eta_1),...,z(\eta_s)]$ where $\eta_1,...,\eta_s$ are sampled independently according to $p(\eta)$. Then with probability $\ge 1-\delta$, for any orthonormal $Q \in \R^{n \times k}$ and any $\Phi$ with $\Phi \Phi^T = K$:
\begin{align}\label{eq:pcp}
(1-\epsilon) \norm{QQ^T\tilde Z - \tilde Z}_F^2 \le \norm{QQ^T\Phi - \Phi}_F^2 \le (1+\epsilon) \norm{QQ^T\tilde Z - \tilde Z}_F^2.
\end{align}
\end{lemma}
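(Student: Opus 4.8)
The plan is to restate \eqref{eq:pcp} purely in terms of the two $n\times n$ Gram matrices $K$ and $\tilde K:=\tilde Z\tilde Z^*$, to recognize the random Fourier draw $\eta_1,\dots,\eta_s\sim p$ as ridge leverage score sampling of the (continuous) feature matrix, and then to apply the projection-cost preserving sketch guarantee of \cite{cohen2015ridge}, with Lemma \ref{ridgeBound} supplying the uniform leverage bound that pins down $s$. First I would note that for any orthonormal $Q\in\R^{n\times k}$ and any $\Phi$ with $\Phi\Phi^T=K$, since $I-QQ^T$ is an orthogonal projection, $\norm{QQ^T\Phi-\Phi}_F^2=\tr((I-QQ^T)\Phi\Phi^T)=\tr((I-QQ^T)K)$, and likewise $\norm{QQ^T\tilde Z-\tilde Z}_F^2=\tr((I-QQ^T)\tilde K)$. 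Thus \eqref{eq:pcp} is equivalent to the assertion $\tr((I-QQ^T)\tilde K)=(1\pm O(\epsilon))\tr((I-QQ^T)K)$ for every rank-$k$ orthogonal projection $QQ^T$ -- a statement about $K$ and $\tilde K$ alone, independent of the particular $\Phi$.

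Next I would set up the sampling. By \eqref{FFT}, $K=\int p(\eta)\,z(\eta)z(\eta)^*\,d\eta$, so $\tilde K=\frac1s\sum_{i=1}^s z(\eta_i)z(\eta_i)^*$ is an unbiased estimator of $K$ obtained by drawing the ``column'' $z(\eta)$ with density $p(\eta)$; this is exactly ridge leverage sampling of the continuous feature matrix, the relevant per-sample quantity being $z(\eta)^*(K+\lambda I)^{-1}z(\eta)=\tau_\lambda(\eta)/p(\eta)$, which Lemma \ref{ridgeBound} (or simply $\norm{z(\eta)}_2^2=n$ together with $\norm{(K+\lambda I)^{-1}}\le 1/\lambda$) bounds by $n/\lambda$ uniformly in $\eta$. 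That bound fixes the sample size: a matrix Bernstein inequality applied to the i.i.d.\ sum $(K+\lambda I)^{-1/2}\tilde K(K+\lambda I)^{-1/2}$, whose summands are rank-one PSD matrices of operator norm at most $n/(s\lambda)$ with mean $(K+\lambda I)^{-1/2}K(K+\lambda I)^{-1/2}\preceq I$, shows that for $s\ge\frac{cn\log(n/\delta\lambda)}{\epsilon^2\lambda}$ one has, with probability $\ge 1-\delta$, $(1-\epsilon)(K+\lambda I)\preceq\tilde K+\lambda I\preceq(1+\epsilon)(K+\lambda I)$. With this spectral approximation in hand I would invoke the theorem of \cite{cohen2015ridge} that such a sample, with regularizer $\lambda\le\frac1k\sum_{i=k+1}^n\sigma_i(K)=\frac1k\norm{\Phi-\Phi_k}_F^2$, is a projection-cost preserving sketch, i.e.\ $\tr((I-QQ^T)\tilde K)=(1\pm O(\epsilon))\tr((I-QQ^T)K)$ for all rank-$k$ $QQ^T$; rescaling $\epsilon$ by a constant and combining with the previous paragraph yields \eqref{eq:pcp}.

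I expect this last implication to be the main obstacle, and it is the reason the lemma cannot be proved just from the spectral bound above. Substituting $W=I-QQ^T$ into the general consequence $|\tr(W(\tilde K-K))|\le\epsilon\tr(W(K+\lambda I))$ of the spectral bound produces an additive $\epsilon\lambda(n-k)$ error (equivalently, via $\tr((I-QQ^T)\tilde K)=\tr(\tilde K)-\tr(QQ^T\tilde K)$, an $\epsilon\tr(QQ^TK)$ error), which for a near-optimal $Q$ dwarfs $\tr((I-QQ^T)K)\approx\norm{\Phi-\Phi_k}_F^2$; the naive route therefore yields only a $(1+\epsilon\cdot n/k)$-approximation and would need $s$ larger by a factor $\sim(n/k)^2$. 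The argument of \cite{cohen2015ridge} must instead split the error $\tilde K-K$ along the top-$k$ ``head'' eigenspace of $K$ and its complementary ``tail'', controlling the head part spectrally on the scale of $\lambda$ rather than $\sigma_1(K)$ and the tail part by $\epsilon$ times the tail mass of $K$, so that the hypothesis $\lambda\le\frac1k\sum_{i>k}\sigma_i(K)$ makes both pieces small next to $\norm{\Phi-\Phi_k}_F^2\le\tr((I-QQ^T)K)$. A secondary, routine point is that this finite-dimensional argument, and the matrix concentration, carry over verbatim to the continuous setting, since everything is phrased in terms of the $n\times n$ matrices $K$, $\tilde K$ and an i.i.d.\ sum.
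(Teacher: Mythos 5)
Your proposal is correct and takes essentially the same route as the paper, which gives no standalone proof of Lemma \ref{pcp}: it derives the lemma exactly as you do, by viewing sampling from $p(\eta)$ as ridge leverage score sampling with the uniform overestimate of Lemma \ref{ridgeBound} fixing $s$, and then citing the projection-cost-preservation theorem of \cite{cohen2015ridge} as carrying over to the continuous setting. Your additional observations (the trace reformulation showing the guarantee depends only on $K$ and $\tilde Z\tilde Z^*$, and that the spectral sandwich alone is insufficient, so the head/tail argument of \cite{cohen2015ridge} is the real content being invoked) are accurate refinements of that same argument.
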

By \eqref{eq:pcp} if we compute $Q$ satisfying $\norm{QQ^T\tilde Z - \tilde Z}_F^2 \le (1+\epsilon) \norm{\tilde Z - \tilde Z_k}_F^2$ then we have:
\begin{align*}
\norm{QQ^T\Phi - \Phi}_F^2 \le (1+\epsilon)^2 \norm{\tilde Z - \tilde Z_k}_F^2 &\le \frac{(1+\epsilon)^2}{1-\epsilon} \norm{U_kU_k^T \Phi - \Phi}_F^2\\
&= (1+O(\epsilon)) \norm{\Phi - \Phi_k}_F^2
\end{align*}
where $U_k \in \R^{n \times k}$ contains the top $k$ column singular vectors of $\Phi$. By adjusting constants on $\epsilon$ by making $c$ large enough, we thus have the relative error low-rank approximation guarantee of \eqref{eq:factorError}. It remains to show that this approach can be implemented efficiently.

\subsection{Input sparsity time implementation}
Given $\tilde Z$ sampled as in Lemma \ref{pcp}, we can find a near optimal subspace $Q$ using any input sparsity time low-rank approximation algorithm (e.g., \cite{clarkson2013low,nelson2013osnap}). We have the following Corollary:
\begin{corollary}\label{timeCor} Given $\tilde Z$ sampled as in Lemma \ref{pcp} with $s = \tilde \Theta(\frac{nk}{\epsilon^2 \sigma_{k+1}(K)})$, there is an algorithm running in time $\tilde O(\frac{n^2 k}{\epsilon^2 \sigma_{k+1}(K)})$ that computes $Q$ satisfying with high probability, for any $\Phi$ with $\Phi \Phi^T = K$:
\begin{align*}\norm{QQ^T\Phi - \Phi}_F^2 \le (1+\epsilon) \norm{\Phi - \Phi_k}_F^2.
\end{align*}
\end{corollary}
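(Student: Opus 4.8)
The plan is to run an off-the-shelf input sparsity time low-rank approximation algorithm on the sketch $\tilde Z$ and then transfer the resulting subspace back to $\Phi$ via the projection-cost preserving guarantee of Lemma~\ref{pcp}. First I would instantiate Lemma~\ref{pcp} with $\lambda = \sigma_{k+1}(K)/k$: this satisfies the hypothesis $\lambda \le \frac1k\sum_{i=k+1}^{n}\sigma_i(K)$ because the single term $\sigma_{k+1}(K)$ already lower bounds that sum, so the lemma applies with $s = \tilde\Theta\!\left(\frac{n}{\epsilon^2\lambda}\right) = \tilde\Theta\!\left(\frac{nk}{\epsilon^2\sigma_{k+1}(K)}\right)$ and failure probability $\delta$, which I would take to be $1/\poly(n)$ at the cost of only logarithmic factors already absorbed into $s$.

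Next, since the columns $z(\eta)$ are complex, I would pass to the real $n\times 2s$ matrix $\wh Z = [\,\mathrm{Re}(\tilde Z)\ \ \mathrm{Im}(\tilde Z)\,]$. For every real orthonormal $Q\in\R^{n\times k}$ one has $\norm{QQ^T\wh Z - \wh Z}_F^2 = \norm{QQ^T\tilde Z - \tilde Z}_F^2$ (the skew-symmetric imaginary part of $\tilde Z\tilde Z^*$ contributes nothing to either projection cost), so \eqref{eq:pcp} holds verbatim with $\wh Z$ in place of $\tilde Z$ and it suffices to operate on $\wh Z$. Running a sparse subspace embedding based low-rank approximation routine \cite{clarkson2013low,nelson2013osnap} on $\wh Z$ produces, with high probability, an orthonormal $Q\in\R^{n\times k}$ with $\norm{QQ^T\wh Z - \wh Z}_F^2 \le (1+\epsilon)\norm{\wh Z - \wh Z_k}_F^2$ in time $\tilde O(\nnz(\wh Z)) + (n+s)\poly(k/\epsilon)$. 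Since $\tilde Z$ is a dense $n\times s$ matrix, $\nnz(\wh Z) = O(ns) = \tilde O\!\left(\frac{n^2 k}{\epsilon^2\sigma_{k+1}(K)}\right)$, and in the stated parameter regime this dominates the additive $\poly(k/\epsilon)$ term, giving the claimed runtime.

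Finally I would chain the two guarantees exactly as in the display following Lemma~\ref{pcp}: the right inequality of \eqref{eq:pcp} (in its $\wh Z$ form) applied to $Q$ gives $\norm{QQ^T\Phi - \Phi}_F^2 \le (1+\epsilon)^2\norm{\wh Z - \wh Z_k}_F^2$, and then $\norm{\wh Z - \wh Z_k}_F^2 \le \norm{U_kU_k^T\wh Z - \wh Z}_F^2$ (as $U_kU_k^T\wh Z$ has rank $\le k$, with $U_k$ the top-$k$ left singular vectors of $\Phi$) together with the left inequality of \eqref{eq:pcp} applied to $U_k$ yields $\norm{QQ^T\Phi - \Phi}_F^2 \le \frac{(1+\epsilon)^2}{1-\epsilon}\norm{U_kU_k^T\Phi - \Phi}_F^2 = (1+O(\epsilon))\norm{\Phi - \Phi_k}_F^2$; rescaling $\epsilon$ by a constant and union bounding over the two failure events completes the proof. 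I do not anticipate a genuine obstacle: the statement is essentially a packaging of Lemma~\ref{pcp} with known input sparsity time low-rank approximation. The points needing (mild) care are handling the complex-valued $\tilde Z$ by splitting into real and imaginary coordinates, checking that the sparse-embedding algorithms return precisely a near-optimal projection subspace $Q$ in the form fed into \eqref{eq:pcp} rather than an explicit factorization, and confirming that $\nnz(\tilde Z) = O(ns)$ so that the low-rank approximation step costs $\tilde O(ns)$ with the $n\,\poly(k/\epsilon)$ additive terms of lower order.
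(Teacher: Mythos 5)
Your proposal is correct and takes essentially the same route as the paper: instantiate Lemma \ref{pcp} with $\lambda = \sigma_{k+1}(K)/k$, run an off-the-shelf input sparsity time low-rank approximation algorithm \cite{clarkson2013low,nelson2013osnap} on $\tilde Z$ at cost $\tilde O(\nnz(\tilde Z)) = \tilde O(ns) = \tilde O\left(\frac{n^2k}{\epsilon^2\sigma_{k+1}(K)}\right)$ plus lower-order additive terms, and then chain the near-optimality of $Q$ for $\tilde Z$ with both directions of \eqref{eq:pcp} exactly as in the display following Lemma \ref{pcp}. Your explicit treatment of the complex-valued $\tilde Z$ by passing to $[\mathrm{Re}(\tilde Z)\ \mathrm{Im}(\tilde Z)]$ (justified since the imaginary part of the Hermitian matrix $\tilde Z\tilde Z^*$ is real skew-symmetric and contributes nothing to any projection cost) is a detail the paper leaves implicit, and you handle it correctly.
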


With Corollary \ref{timeCor} in place the main bottleneck to our approach becomes computing $\tilde Z$. 
\subsubsection{Sampling Frequencies}
To compute $\tilde Z$, we first sample $\eta_1,...,\eta_s$ according to $p(\eta)$. Here we use the rotational invariance of $\kf(\cdot)$. In this case, $p(\eta)$ is also rotationally invariant \cite{le2013fastfood} and so, letting $\hat p(\cdot)$ be the distribution over norms of vectors sampled from $p(\eta)$ we can sample $\eta_1,...,\eta_n$ by first selecting $s$ random Gaussian vectors and then rescaling them to have norms distributed according to $\hat p(\cdot)$. That is, we can write $[\eta_1,...,\eta_n] = G D$ where $G \in \R^{d \times s}$ is a random Gaussian matrix and $D$ is a diagonal rescaling matrix with $D_{ii} = \frac{m}{\norm{G_i}}$ with $m \sim \hat p$. We will assume that $\hat p$ can be sampled from in $O(1)$ time. This is true for many natural kernels -- e.g., for the Gaussian kernel, $\hat p $ is just a Gaussian density.

\subsubsection{Computing $\tilde Z$}

Due to our large sample size, $s > n$, even writing down $G$ above requires $\Omega(nd)$ time. However, to form $\tilde Z$ we do not need $G$ itself: it suffices to compute for $m=1,...,s$ the column $z(\eta_m)$ with $z(\eta_m)_{j} = e^{-2\pi i \eta_m^T a_j}$. This requires computing $AGD$, which contains the appropriate dot products $a_j^T \eta_m$ for all $m,j$. We use a recent result \cite{kapralov2016fake} which shows that this can be performed approximately in input sparsity time:
\begin{lemma}[From Theorem 1 of \cite{kapralov2016fake}]\label{fastGauss} There is an algorithm running in $O(\nnz(A) + \frac{\log^4 d n^3 s^{\omega-1.5}}{\delta})$ time which outputs random $B$ whose distribution has total variation distance at most $\delta$ from the distribution of $AG$ where $G \in \R^{d \times s}$ is a random Gaussian matrix. Here, $\omega < 2.373$ is the exponent of fast matrix multiplication.
\end{lemma}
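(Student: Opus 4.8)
The plan is essentially to \emph{import} this statement: Lemma~\ref{fastGauss} is a specialization of Theorem~1 of \cite{kapralov2016fake}, so the only work is to match their parameters to the ones we need (their general statement is phrased for an arbitrary target number of columns and an arbitrary total-variation budget, and we simply set the number of columns to our sample size $s$ and the budget to $\delta$). Nothing in the intended use of this lemma — feeding the output into Corollary~\ref{timeCor} to prove Theorem~\ref{thm:introUpper} — requires reopening their argument, so I would not attempt to reprove it; I would only record the parameter translation and note that their running time is additive in $\nnz(A)$ with the remaining terms polynomial in $n$ and, crucially, sublinear in $s$ (the exponent $\omega-1.5<0.873$ on $s$ is exactly what later lets Theorem~\ref{thm:introUpper} absorb the sampling cost into $O(\nnz(A))$ for moderate parameters).

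If a self-contained argument were wanted, here is the shape I would follow. First observe that the columns of $AG$ are i.i.d.\ draws from $N(0,AA^T)$, equivalently that $AG$ has the same distribution as $(AV)H$, where the columns of $V$ form an orthonormal basis for the row space of $A$ (so $r\eqdef\rank(A)\le\min(n,d)$) and $H\in\R^{r\times s}$ is a fresh Gaussian matrix; this follows from rotational invariance of the Gaussian, since $A=AVV^T$ and $V^TG$ is again i.i.d.\ Gaussian. Hence it suffices to (i) produce the $n\times r$ matrix $AV$ and (ii) multiply it by an $r\times s$ Gaussian. Step (ii) is cheap relative to $s$: blocking $H$ into $s/r$ square pieces and invoking fast (rectangular) matrix multiplication gives the $n^{O(1)}s^{\omega-1.5}$-type dependence, since one never pays $\Theta(ns)$. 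Step (i) is where the difficulty lies, because $AV=U\Sigma$ naively requires a thin SVD of $A$, and even just forming a factorization of the Gram matrix $AA^T$ costs $\Theta(\nnz(A)\cdot n)$ in the worst case — precisely what we are trying to beat.

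The way \cite{kapralov2016fake} get around this (and the way I would) is to settle for approximation in total variation: compute an input-sparsity-time sketch $A\tilde S$ with $\tilde S\in\R^{d\times\poly(n)}$ (e.g.\ an OSNAP / sparse embedding matrix \cite{nelson2013osnap}, applied in $O(\nnz(A))$ time), use it to extract an approximate orthonormal basis for the row space of $A$ without ever touching $AA^T$ directly, and then carry this approximation through the Gaussian-multiplication step; the error incurred is converted into the $\delta$ total-variation slack, which is why the second running-time term carries a $1/\delta$ (and why the $\log^4 d$ factor appears — from the embedding and the generation of the structured randomness). The main obstacle in making this rigorous, and the reason I would cite \cite{kapralov2016fake} rather than redo it, is the delicate control of total-variation distance under the composition of (a) replacing the true row-space basis by an approximate one and (b) the rectangular-fast-matrix-multiplication-based Gaussian multiply: the TV distance between $N(0,\Sigma_1)$ and $N(0,\Sigma_2)$ is governed by \emph{multiplicative} closeness of the covariances, so one must take the sketching accuracy to roughly $\delta/\poly(n)$ and propagate that bound through every step, which is the technical heart of their paper.
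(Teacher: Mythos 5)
Your primary plan---import Theorem 1 of \cite{kapralov2016fake} and only translate parameters---is essentially the paper's approach, but it slightly overstates what the citation hands you. In the paper, the cited theorem supplies only the structural/TV statement: to be within total variation distance $\delta$ of the distribution of $AG$ it suffices to output $B = ACG'$, where $C$ is a $d \times O(\log^4 d\, n^2 s^{1/2}/\delta)$ CountSketch matrix and $G'$ is a Gaussian matrix with $s$ columns. The running time claimed in the lemma is then derived by the paper itself: applying $C$ costs $O(\nnz(A))$, and the dense product of the resulting $n \times O(\log^4 d\, n^2 s^{1/2}/\delta)$ matrix with $G'$ costs $O(\log^4 d\, n^3 s^{1.5}/\delta)$ naively, improved to $O(\log^4 d\, n^3 s^{\omega-1.5}/\delta)$ via fast matrix multiplication. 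In other words, the sublinear-in-$s$ exponent you single out as crucial for Theorem \ref{thm:introUpper} is produced by this final accounting step, not read off the cited statement, so a complete writeup should include that short derivation rather than treating the lemma as a verbatim import. Your hypothetical self-contained route (reduce to an approximate orthonormal row-space basis via an OSNAP-type sketch, then multiply by a fresh Gaussian, pushing the sketching error into the TV budget) is a genuinely different decomposition from the CountSketch-then-Gaussian mechanism the lemma actually invokes; since you explicitly defer the delicate TV-distance analysis to \cite{kapralov2016fake}, this does not affect correctness, but it is not the construction the cited theorem provides, and in the paper's proof no row-space basis (exact or approximate) is ever formed.
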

\begin{proof}
Theorem 1 of \cite{kapralov2016fake} shows that for $B$ to have total variation distance $\delta$ from the distribution of $AG$ it suffices to set $B = A C G'$ where $C$ is a $d \times O(\log^4 d n^2 s^{1/2}/\delta)$ CountSketch matrix and 
$G'$ is an $O(\log^4 d n^2 s^{1/2}/\delta) \times s$ random Gaussian matrix. Computing $AC$ requires $O(\nnz(A))$ time. Multiplying the result by $G'$ then requires $O(\frac{\log^4 d n^3 s^{1.5}}{\delta})$ time if fast matrix multiplication is not employed. Using fast matrix multiplication, this can be improved to $O(\frac{\log^4 d n^3 s^{\omega-1.5}}{\delta})$.
\end{proof}

Applying Lemma \ref{fastGauss} with $\delta = 1/200$ lets us compute random $BD$ with total variation distance $1/200$ from $AGD$. Thus, the distribution of $\tilde Z$ generated from this matrix has total variation distance $\le 1/200$ from the $\tilde Z$ generated from the true random Fourier features distribution. So, by Corollary \ref{timeCor}, we can use $\tilde Z$ to compute $Q$ satisfying $\norm{QQ^T\Phi - \Phi}_F^2 \le (1+\epsilon) \norm{\Phi - \Phi_k}_F^2$ with probability $1/100$ accounting for the the total variation difference and the failure probability of Corollary \ref{timeCor}. This yields our main algorithmic result, Theorem \ref{thm:introUpper}.

\subsection{An alternative approach}

We conclude by noting that near input sparsity time Kernel PCA can also be achieved for a broad class of kernels using a very different approach. We can approximate $\kf(\cdot,\cdot)$ via an expansion into polynomial kernel matrices as is done in \cite{cotter2011explicit} and then apply the sketching algorithms for the polynomial kernel developed in \cite{avron2014subspace}. As long as the expansion achieves high accuracy with low degree, and as long as $1/\sigma_{k+1}$ is not too small -- since this will control the necessary approximation factor, this technique can yield runtimes of the form $\tilde O(\nnz(A)) + \poly(n,k, 1/\sigma_{k+1},1/\epsilon)$, giving improved dependence on $n$ for some kernels over our random Fourier features method. Improving the $\poly(n,k, 1/\sigma_{k+1},1/\epsilon)$ term in both these methods, and especially removing the $1/\sigma_{k+1}$ dependence and achieving linear dependence on $n$ is an interesting open question for future work.

\section{Conclusion}

In this work we have shown that for a broad class of kernels, including the Gaussian, polynomial, and linear kernels, given data matrix $A$, computing a relative error low-rank approximation to $A$'s kernel matrix $K$ (i.e., satisfying \eqref{eq:relError}) requires at least $\Omega(\nnz(A)k)$ time, barring a major breakthrough in the runtime of matrix multiplication. In the constant error regime, this lower bound essentially matches the runtimes given by recent work on subquadratic time kernel and PSD matrix low-rank approximation \cite{musco2016recursive,musco2017sublinear}.

We show that for the alternative kernel PCA guarantee of \eqref{eq:factorError}, a potentially  faster runtime of $O(\nnz(A)) + \poly(n, k, 1/\sigma_{k+1},1/\epsilon)$ can be achieved for general shift and rotation-invariant kernels. Practically, improving the second term in our runtime, especially the poor dependence on $n$, is an important open question. Generally, computing the kernel matrix $K$ explicitly requires $O(n^2 d)$ time, and so our algorithm only gives runtime gains when $d$ is large compared to $n$ -- at least $\Omega (n^{\omega-.5})$, even ignoring $k$, $\sigma_{k+1}$, and $\epsilon$ dependencies. 
Theoretically, removing the dependence on $\sigma_{k+1}$ would be of interest, as it would give input sparsity runtime without any assumptions on the matrix $A$ (i.e., that $\sigma_{k+1}$ is not too small). Resolving this question has strong connections to finding efficient kernel \emph{subspace embeddings}, which approximate the full spectrum of $K$.

\bibliographystyle{alpha}
\bibliography{kernelLowRank}
\clearpage
\appendix
\section{Fast low-rank approximation of $AA^T$}\label{AAalgo}

We give an algorithm which matches the lower bound of Theorem \ref{initialHardness}.
\begin{theorem} There is an algorithm, which given $A \in \R^{n \times d}$ computes $N \in \R^{n \times k}$ in $O(\nnz(A)k) + n\cdot \poly(k/\epsilon)$ time such that probability $99/100$:
\begin{align*}
\norm{AA^T - NN^T}_F^2 \le (1+\epsilon) \norm{AA^T - (AA^T)_k}_F^2.
\end{align*}
\end{theorem}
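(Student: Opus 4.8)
The goal is to compute $N \in \R^{n \times k}$ with $\norm{AA^T - NN^T}_F^2 \le (1+\epsilon)\norm{AA^T - (AA^T)_k}_F^2$ in $O(\nnz(A)k) + n \poly(k/\epsilon)$ time. The plan is to reduce this to a standard input-sparsity-time low-rank approximation of $A$ itself, using the fact that the top eigenspace of $AA^T$ is exactly the top left-singular space of $A$, and that the singular values of $AA^T$ are the squared singular values of $A$.

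First I would run an input-sparsity-time low-rank approximation algorithm for $A$ (e.g., \cite{clarkson2013low,nelson2013osnap}): apply a sparse subspace embedding / affine embedding $S$ so that $SA \in \R^{m \times d}$ with $m = \poly(k/\epsilon)$, and more carefully use the standard two-sided-sketch construction that in $O(\nnz(A)) + n\poly(k/\epsilon)$ time produces an orthonormal $U \in \R^{n \times r}$, $r = \poly(k/\epsilon)$, whose column span contains a $(1+\epsilon)$-optimal rank-$k$ approximation of $A$ in the column space sense, i.e. there is a rank-$k$ $\hat A$ with columns in $\colspan(U)$ and $\norm{A - \hat A}_F^2 \le (1+\epsilon)\norm{A - A_k}_F^2$. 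Here the key point is that we want a subspace $U$ for the \emph{left} factor of $A$; such $U$ can be obtained by sketching on the right (forming $AR$ for a sparse $R$), orthonormalizing $AR$ to get $U$, which costs $O(\nnz(A) r) + n\poly(k/\epsilon)$ — this is where the $O(\nnz(A)k)$ (with $r = \poly(k/\epsilon)$, so really $O(\nnz(A)\poly(k/\epsilon))$, but one can be careful to make the $\nnz$-term's dependence exactly $k$ by choosing $r$ appropriately in the first rough sketch, then refining) term comes from.

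Next, project: compute $U^T A \in \R^{r \times d}$ in $O(\nnz(A) r)$ time, form the small $r \times r$ matrix $(U^T A)(U^T A)^T = U^T A A^T U$ in $\poly(k/\epsilon) \cdot d$... — wait, that is $d \cdot \poly(k/\epsilon)$, which is too slow if $d$ is large; instead form $(U^TA)(A^TU)$ by first computing $W = U^TA \in \R^{r\times d}$ in $O(\nnz(A)r)$ time and then $WW^T \in \R^{r \times r}$ in $O(dr^2) $ time — still $d$-dependent. To avoid the $d$ dependence, sketch on the right as well: replace $A$ in the Gram computation by $A\tilde R$ where $\tilde R$ is an affine/projection-cost-preserving sketch with $\poly(k/\epsilon)$ columns, so $\norm{U^T A \tilde R (A\tilde R)^T U - U^TAA^TU}$ is small in the relevant sense; compute $U^T(A\tilde R) \in \R^{r \times \poly(k/\epsilon)}$ in $O(\nnz(A)r) + n\poly(k/\epsilon)$ time and its Gram matrix in $\poly(k/\epsilon)$ time. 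Then take the top $k$ eigenvectors $Y \in \R^{r \times k}$ of this small Gram matrix, and output $N = U Y \Sigma^{1/2}$ where $\Sigma$ is the diagonal of top $k$ eigenvalues (so $NN^T = UY\Sigma Y^T U^T$), formed in $n\poly(k/\epsilon)$ time.

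For correctness I would argue: since $\colspan(U)$ contains a $(1+\epsilon)$-optimal rank-$k$ approximation of $A$, the Pythagorean theorem gives $\norm{AA^T - UU^T AA^T UU^T}_F^2 \le \norm{AA^T - (UU^T A)(UU^T A)^T}_F^2$-style bounds controlling the cost of restricting to $\colspan(U)$ on both sides; then within the $r$-dimensional space the best rank-$k$ approximation of the (sketched) Gram matrix is given by $Y\Sigma Y^T$, and projection-cost-preservation of $\tilde R$ transfers this back to $AA^T$. Chaining the $(1+\epsilon)$ factors (adjusting constants) yields the claimed bound, with failure probability absorbed into the $99/100$. The main obstacle is the bookkeeping to keep the $\nnz(A)$-term's $k$-dependence linear rather than $\poly(k/\epsilon)$ while simultaneously eliminating the $d$-dependence — this requires the two-sided sketching argument (sketch left to get $U$ with a $\nnz(A)k$-cost rough pass, sketch right with a projection-cost-preserving matrix to make the Gram computation $n\poly(k/\epsilon)$) and a careful PCP-style analysis that the restriction to $\colspan(U)$ plus the right sketch only loses a $(1+\epsilon)$ factor; everything else is routine.
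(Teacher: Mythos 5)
Your reduction breaks at its very first step, and the break is the whole difficulty of this theorem. You take $U$ to span $\colspan(AR)$, which is guaranteed (only) to contain a $(1+\epsilon)$-optimal rank-$k$ approximation of $A$ in Frobenius norm, and then assert that restricting to $\colspan(U)$ costs just a $(1+O(\epsilon))$ factor for $AA^T$ as well. That transfer is false, because the two problems live at different scales: $\norm{A-A_k}_F^2=\sum_{i>k}\sigma_i^2(A)$ while $\norm{AA^T-(AA^T)_k}_F^2=\sum_{i>k}\sigma_i^4(A)$, so a subspace that is near-optimal for $A$ can be far from optimal for $AA^T$. A simple instance with $k=1$: for $A=\mathrm{diag}(\sqrt{2},1,m^{-1/4},\dots,m^{-1/4})$ with $m\ge 1/\epsilon^2$ trailing entries, the span of $e_2$ is a $(1+\epsilon)$-good rank-one subspace for $A$ (error $2+\sqrt{m}$ vs.\ optimal $1+\sqrt{m}$) but only a $2.5$-factor approximation for $AA^T$ (error $5$ vs.\ optimal $2$). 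Worse, the specific subspace you build fails outright: take $A=\mathrm{diag}(1,\delta,\dots,\delta)\in\R^{(m+1)\times(m+1)}$ with $\delta^2=1/m$ and a Gaussian (or sparse) sketch $R$ with $r=\poly(k/\epsilon)\ll m$ columns. Every unit vector $u\in\colspan(AR)$ has the form $(\alpha e_1+\delta Hc)/Z$ with, w.h.p., $\norm{Hc}^2/\alpha^2\gtrsim m/r$, hence $u_1^2\le 1-\Omega(1/r)$; so the best rank-one approximation of $AA^T$ supported in $\colspan(AR)$ has squared error roughly $1/m+1/r$, versus the optimum $1/m$ --- a multiplicative blow-up of order $1+m/r\gg 1+\epsilon$. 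No amount of projection-cost-preserving sketching on the right or careful chaining can repair this, because the subspace itself is wrong: a $\poly(k/\epsilon)$-column sketch of $A$ simply does not carry the $\sigma^4$-scale information needed for relative error on $AA^T$. (Your secondary worry about keeping the $\nnz(A)$ coefficient linear in $k$ rather than $\poly(k/\epsilon)$ is legitimate, but moot next to this.)

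The paper's proof takes a genuinely different route that sketches $AA^T$ rather than $A$. It invokes Lemma 11 of Clarkson and Woodruff \cite{ClarksonW17}: one can choose $R,S\in\R^{n\times O(k/\epsilon)}$ so that already the best rank-$k$ matrix of the restricted form $AA^TR\,Y\,S^TAA^T$ is a $(1+\epsilon)$-approximation of $AA^T$ --- the spanning factors $AA^TR$ and $S^TAA^T$ are sketches of $AA^T$ itself, which is exactly what $AR$ is not. The small rank-constrained problem for $Y$ is then solved approximately after further CountSketch affine embeddings $T_L,T_R$ using the Friedland--Torokhti formula, and the only step costing $O(\nnz(A)k)$ is forming the final factors $AA^TRV$ and $AA^TSV$ (computed as $A(A^T(RV))$, etc.), followed by a symmetrization. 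This is also the conceptually consistent picture: matching the paper's lower bound, the algorithm really does pay for a multiplication of $A$ against a $k$-column matrix, which is precisely the cost your plan was structured to avoid.
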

\begin{proof}
It is known (see Lemma 11 of \cite{ClarksonW17}) that there exists a distribution over random matrices $R,S \in \R^{n \times O(k/\epsilon)}$ which can be applied to $A$ in $O(\nnz(A)) + n\cdot\poly(k/\epsilon)$ time such that with probability $199/200$, setting
\begin{align*}
Y^* = \argmin_{Y \in O(k/\epsilon) \times O(k/\epsilon)\text{ with rank }k} \norm{AA^T R Y S^T AA^T - AA^T}_F^2
\end{align*}
we have:
\begin{align*}
\norm{AA^T R Y^* S^T AA^T - AA^T}_F^2 \le (1+\epsilon) \norm{AA^T - (AA^T)_k}^2_F.
\end{align*}
We can solve for an approximately optimal $\tilde Y$ by further sketching our problem on the left and right (similar to the technique used in Lemma 15 of \cite{ClarksonW17}). Specifically, if we let $T_L, T_R \in \R^{n \times \poly(k/\epsilon)}$ be drawn from the Count Sketch distribution, we can solve:
\begin{align*}
\tilde Y = \argmin_{Y \in O(k/\epsilon) \times O(k/\epsilon)\text{ with rank }k} \norm{T_L^TAA^T R Y S^T AA^TT_R - T_L^TAA^TT_R}_F^2
\end{align*}
and are guaranteed that with probability $99/100$, 
\begin{align}\label{ytilde}
\norm{AA^T R \tilde Y S^T AA^T - AA^T}_F^2 \le (1+2\epsilon) \norm{AA^T - (AA^T)_k}^2_F.
\end{align}

Computing $\tilde Y$ requires forming $T_L^TA$, $A^TR$, $S^T A$, and $A^T T_R$ and then multiplying the appropriate matrices together. This takes $O(\nnz(A)) + n \poly(k/\epsilon)$ time. Once $T_L^TAA^T R$, $S^T AA^TT_R$ and $T_L^TAA^TT_R$ have been formed we can solve for $\tilde Y$ in $\poly(k/\epsilon)$ time using the formula of \cite{friedland2007generalized}.

Finally, since $\tilde Y$ is rank-$k$ we can factor $\tilde Y = VV^T$ for $V \in \R^{O(k/\epsilon) \times k}$ using the SVD. We can then compute $N_1 = AA^T R V \in \R^{n\times k}$ and $N_3 = AA^T S V \in \R^{n\times k}$ which satisfy $\norm{AA^T - N_1 N_2^T}_F^2 \le (1+2\epsilon) \norm{AA^T - (AA^T)_k}^2_F$ with probability $99/100$ by \eqref{ytilde}. 

$N_1$ and $N_2$ both require $O(\nnz(A)k) + n \cdot \poly(k/\epsilon)$ time to compute.
The theorem follows from adjusting constants on $\epsilon$ and noting that we can symmetrize $N_1 N_2^T$ to form $N N^T$ if desired in $n \cdot \poly(k/\epsilon)$ time.

\end{proof}

\section{Hardness of outputting a low-rank subspace}\label{subspaceHardness}
Theorem \ref{initialHardness} shows a lower bound on outputting a relative-error low-rank approximation to $MM^T$. Here we show that this hardness extends to the possibly easier problem of just outputting a low-rank span that contains a relative-error low-rank approximation. This result extends analogously to the other kernel lower bounds discussed in Section \ref{sec:lower}.
\begin{theorem}[Hardness of low-rank span for $MM^T$]\label{spanHardness} Assume there is an algorithm $\algA$ which given any $M \in \R^{n \times d}$ returns orthonormal $Z \in \R^{n \times k}$ such that $\norm{MM^T - ZZ^TMM^T}_F^2 \le \Delta_1 \norm{MM^T-(MM^T)_k}_F^2$ in $T(M,k)$ time for some approximation factor $\Delta_1$.

For any $A \in \R^{n \times d}$ and $C \in \mathbb{R}^{d \times k}$ each with integer entries in $[-\Delta_2,\Delta_2]$, let $B = [A^T, w C]^T$ where $w = 3\sqrt{\Delta_1}\Delta_2^2 nd$. It is possible to compute the product $AC$ in time $T(B,2k) + \tilde O((n+d)k^{\omega-1})$.
\end{theorem}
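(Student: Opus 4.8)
The plan is to mimic the reduction in Theorem~\ref{initialHardness} while working around the fact that $\algA$ now returns only a subspace, not a factored approximation. First I would form $B = [A^T, wC]^T$ with the same $w = 3\sqrt{\Delta_1}\Delta_2^2 nd$ and run $\algA$ on $B$ with rank parameter $2k$, obtaining orthonormal $Z \in \R^{(n+k)\times 2k}$ with $\norm{BB^T - ZZ^T BB^T}_F^2 \le \Delta_1\norm{BB^T - (BB^T)_{2k}}_F^2$. By \eqref{AAbondy} the right-hand side is at most $\Delta_1\Delta_2^4 n^2 d^2$, so every entry of $\hat K := ZZ^T BB^T$ is within $\sqrt{\Delta_1}\Delta_2^2 nd = w/3$ of the corresponding entry of $BB^T$; in particular, if we could extract the top-right $n\times k$ block of $\hat K$, rounding each entry to the nearest multiple of $w$ would recover $wAC$ exactly, as in Theorem~\ref{initialHardness}.

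Writing $Z = [Z_1; Z_2]$ with $Z_1\in\R^{n\times 2k}$ and $Z_2\in\R^{k\times 2k}$, the obstacle is that forming $\hat K$ naively needs $Z^T B = Z_1^T A + w Z_2^T C^T$, and the term $Z_1^T A$ costs $\Theta(\nnz(A)k)$ — which would defeat the reduction. To avoid it, observe that the block we want equals
$$\big(\hat K\big)_{[1:n],\,[n+1:n+k]} \;=\; w\,Z_1\big(Z^T(BC)\big) \;=\; w\,Z_1\Gamma \;+\; w^2\,Z_1 Z_2^T (C^TC), \qquad \Gamma := Z_1^T AC,$$
where $C^TC$ and $Z_1Z_2^T$ are computable without touching $A$, so it suffices to recover the $2k\times k$ matrix $\Gamma$. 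For this I would exploit (i) that $\Gamma \in \mathrm{rowspan}(Z_1)$, and (ii) the single block of $BB^T$ we already know exactly: matching the last $k$ rows of $ZZ^T(BC)$ against those of $BC$, namely against $wC^TC$, yields the approximate linear constraint $Z_2\Gamma \approx w(I - Z_2Z_2^T)C^TC$. Parametrising $\Gamma$ by its coordinates in a thin-SVD basis $V_1$ for $\mathrm{rowspan}(Z_1)$ reduces this to an $O(k)\times O(k)$ linear solve; given $\Gamma$ we form $Z_1\Gamma + wZ_1Z_2^T C^TC$ and round. The SVD of $Z_1$, the computation of $C^TC$, and the small solves all fit within $\tilde O((n+d)k^{\omega-1})$ time.

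The step I expect to be the real difficulty is making the recovery of $\Gamma$ numerically robust enough to round. The effective system matrix here is $Z_2V_1$, whose smallest singular value is $\sqrt{1-\sigma_{\max}(Z_1)^2}$, so it degenerates exactly when $Z_1$ has a singular value near $1$ — equivalently, when $\mathrm{colspan}(Z)$ nearly contains a direction supported on the first $n$ coordinates, which is precisely the behaviour forced when $AC$ is (nearly) rank-deficient. The main work is therefore a quantitative conditioning bound: either showing that for the $\algA$-output $Z$ this degeneracy is controlled by a $\poly(n,d,\Delta_2)$ factor and absorbing it by enlarging $w$ (hence the bit length) by the same factor, or else first preprocessing the instance into a non-degenerate one — e.g.\ padding $C$ with a small block of generic integer columns so that the padded product is full column rank, invoking $\algA$ at rank parameter $O(k)$, and projecting back. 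Pinning down this argument, and ruling out adversarial $Z$, is where essentially all the effort beyond Theorem~\ref{initialHardness} goes.
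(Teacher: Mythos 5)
Your reduction, the entrywise bound via \eqref{AAbondy}, and the identification of the real obstacle (forming $ZZ^TBB^T$ outright needs $Z_1^TA$, which costs $\Theta(\nnz(A)k)$) all match the paper, but the workaround you propose has a genuine gap --- the one you yourself flag --- and it is not a technicality that can be patched by a conditioning bound. Recovering $\Gamma = Z_1^TAC$ from the single approximate constraint $Z_2\Gamma \approx w(I-Z_2Z_2^T)C^TC$ together with $\mathrm{colspan}(\Gamma)\subseteq\mathrm{rowspan}(Z_1)$ fails outright whenever $\mathrm{colspan}(Z)$ contains a direction supported on the first $n$ coordinates: then there is a unit $v\in\mathrm{rowspan}(Z_1)$ with $Z_2v=0$, the component of $\Gamma$ along $v$ is completely invisible to your constraint, yet it enters $Z_1\Gamma$ at full strength, and the only a priori bound on it ($|(AC)_{i,j}|\le d\Delta_2^2$) is far too weak for the $O(1)$ entrywise accuracy rounding needs. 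This is a realizable output of $\algA$: if $AC$ has rank $r<k$ the heavy strip spans at most $k+r$ dimensions, so $\algA$ may, consistently with its guarantee for any $\Delta_1$, spend the remaining columns of $Z$ on top directions of $AA^T$, which lie entirely in the first $n$ coordinates. Neither suggested repair is carried out or obviously works: enlarging $w$ does not help because the offending component is unconstrained rather than merely amplified, and padding $C$ does not stop $\algA$ from returning such a tilted $Z$ within the slack $\Delta_1$ allows. More fundamentally, your post-processing uses only $Z$ and $C$ and never touches $A$ or $B$ again after the call to $\algA$, so it is entirely at the mercy of an adversarial choice of $Z$; some further (cheap) access to $A$ is needed.

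That extra access is exactly how the paper's proof closes the hole, and it avoids any conditioning discussion. It approximates the projection of the last $k$ columns of $BB^T$ onto $\mathrm{colspan}(Z)$ by solving the regression $\min_X\norm{ZX^T - BB^TR}_F$ after subsampling $s=O(k\log k)$ rows according to the row norms (leverage scores) of $Z$: the sampled right-hand side is $(SB)(wC)$, which reads only $O(k\log k)$ rows of $B$ (hence of $A$) and costs $\tilde O(dk^{\omega-1})$, the small regression costs $\tilde O(k^\omega)$, and the standard leverage-score guarantee gives $\norm{ZX^T-BB^TR}_F^2 = O(1)\cdot\norm{ZZ^TBB^TR-BB^TR}_F^2 \le O(\Delta_1)\norm{BB^T-(BB^T)_{2k}}_F^2$ for \emph{any} orthonormal $Z$, with no nondegeneracy assumption. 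Forming the top $n\times k$ block of $ZX^T$ takes $O(nk^{\omega-1})$, and the rounding argument of Theorem~\ref{initialHardness} (with constants adjusted for the $O(1)$ loss) then recovers $AC$. If you want to salvage your scheme, you would need to add a step of this flavor --- use a small sample of rows of $A$ to pin down the components of $\Gamma$ that $Z_2$ cannot see --- at which point you have essentially rederived the paper's proof.
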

\begin{proof}
$ZZ^TMM^T$ is the projection of $M$ onto the column span of $Z$. This projection can be performed approximately using standard leverage score sampling techniques (see e.g., \cite{clarkson2013low}). Let $S \in \R^{s \times n}$ be a sampling matrix sampling rows of $Z$ by its row norms (its leverage scores since it is orthonormal) where $s = c(k \log k)$ or some sufficiently large constant $c$. Let $R$ be the $n \times k$ matrix which selects the last $k$ columns of $MM^T$.

Letting
 $X^* = \argmin_{X \in k \times k} \norm{ZX^T - MM^TR}_F^2$ and
 $X = \argmin_{X \in k \times k} \norm{SZX^T - SMM^TR}_F^2$ we have by a well known leverage score approximate regression result with high probability in $k$:
\begin{align*}
 \norm{ZX^T - MM^TR}_F^2 &= O(1) \cdot \norm{Z(X^*)^T - MM^TR}_F^2\\
 &= O(1) \cdot \norm{ZZ^TMM^TR - MM^TR}_F^2\\
&= O(\Delta_1) \norm{MM^T - (MM^T)_k}_F^2.
\end{align*}
Further, computing $X$ requires $\tilde O(dk^{\omega-1})$ time to compute the $O(k\log k) \times k$ submatrix $SMM^TR$ as well as $\tilde O(k^{\omega}) = \tilde O(nk^{\omega-1})$ to perform the regression. This gives the result via Theorem \ref{initialHardness} since computing $Z$ with rank-$2k$ $ZX^T$ gives a low-rank approximation of $MM^T$ with error $O(\Delta_1) \norm{MM^T - (MM^T)_{2k}}_F^2$ \emph{measured on the last $k$ columns of $M$}. Small error on these columns is all that is needed to recover $AC$ accurately (see proof of Theorem \ref{initialHardness}).
\end{proof}

\section{Additional lower bound proofs}\label{additionalProofs}
We now prove our hardness result for kernels depending on the squared distance $\norm{a_i-a_j}_2^2.$ 
%We then give a slightly more complex but analogous proof in the case when the kernel depends on the non-squared norm $\norm{a_i - a_j}_2$. Of course, and function of this norm can be written as a function of $\norm{a_i-a_j}_2^2$, but with a different power series expansion.
\begin{reptheorem}{thm:polyDist}
Consider any kernel function $\kf: \mathbb{R}^d \times \R^d \rightarrow \R^+$ with $\kf(a_i, a_j) = f(\norm{a_i-a_j}^2)$ for some function $f$ which can be expanded as $f(x) = \sum_{q=0}^\infty c_q x^q$ with $c_1 \neq 0$ and $|c_q/c_1| \le G^{q-1}$ and for all $q \ge 2$ and some $G \ge 1$.

Assume there is an algorithm $\algA$ which given input $M \in \R^{n \times d}$ with kernel matrix $K = \{\kf(m_i,m_j)\}$, returns $N \in \R^{n \times k}$ satisfying $\norm{K - NN^T}_F^2 \le \Delta_1 \norm{K - K_k}$ in $T(M,k)$ time.

For any $A \in \R^{n \times d}$, $C \in \R^{d \times k}$, with integer entries in $[-\Delta_2,\Delta_2]$, let $B = [w_1 A^T, w_2 C]^T$ where $w_1 = \frac{w_2}{36 \sqrt{\Delta_1} \Delta_2^2 n d}$ and $w_2 = \frac{1}{(16 G d^2 \Delta_2^4)(36 \sqrt{\Delta_1} \Delta_2^2 n d)}$. It is possible to compute $A C$ in time $T(B,2k+3) + O(nk^{\omega-1}).$
\end{reptheorem}
\begin{proof}
%The proof is similar to that of Theorem \ref{thm:polyLB}. 
Define the distance matrix $D \in \R^{n + k \times n+k}$ with $D_{i,j} = \norm{b_i-b_j}^2$. Using the fact that 
$\norm{b_i-b_j}^2 = \norm{b_i}^2 + \norm{b_i}^2 - 2 b_i^T b_j$ we have $D = E + E^T - 2BB^T$ where $E$ is a rank-$1$ matrix with all rows equal to $[\norm{b_1}_2^2,...,\norm{b_{n+k}}_2^2]$.
We can write the kernel matrix for $B$ and $k$ as:
\begin{align}\label{distDecomp}
K = c_0 \begin{bmatrix}
1 & 1\\ 1 & 1
\end{bmatrix} + c_1(E + E^T) - 2 c_1 \begin{bmatrix}
w_1^2AA^T & w_1w_2AC\\ w_1w_2C^TA^T & w_2^2C^TC
\end{bmatrix} 
+ c_2D^{(2)} + c_3D^{(3)} + ...
\end{align}
where $D^{(q)}_{i,j} = \norm{b_i-b_j}^{2q}$. Let $\bar K$ be $K - c_0 \cdot 1 - c_1(E+E^T)$ , with its top $n \times n$ block set to $0$. $\bar K$ has rank at most $2k$ and if we set $Q \in \R^{n \times 2k+3}$ to be a matrix with columns spanning the columns of $\bar K$, the all ones vector, $E$ and $E^T$, then letting $N$ be the result of running $\algA$ on $B$ with rank $2k+3$:
\begin{align}\label{cam1}
\norm{K-NN^T}_F^2 \le \Delta_1 \norm{K-QQ^TK}_F^2 \le 
\Delta_1 \left \| \begin{bmatrix}
-2c_1 w_1^2 AA^T + c_2 \hat{D}^{(2)} + ...  & 0\\
0 & 0
\end{bmatrix} \right \|_F^2
\end{align}
where $\hat{D}^{(q)}$ denotes the top left $n \times n$ submatrix of ${D}^{(q)}$.

By our bounds on the entries of $A$, for $i,j \le n$, $\norm{b_i-b_j}^{2} \le 4d \Delta_2^2 w_1^2$ and by our setting of $w_1,w_2$, plugging into \eqref{cam1} we have for all $i,j$:
\begin{align}\label{distLowRank}
|(K-NN^T)_{i,j} | &\le \norm{K-NN^T}_F\\
 &\le \sqrt{\Delta_1} n \left (2c_1 d\Delta_2^2 w_1^2 + \sum_{q=2}^\infty c_q(4d \Delta_2^2 w_1^2)^q \right )\nonumber\\
&\le \sqrt{\Delta_1} n c_1 d \Delta_2^2 w_1^2 \left (2 + \sum_{q=2}^\infty (4Gd\Delta_2^2 w_1^2)^{q-1}\right )\text{\hspace{2em} (Since $|c_q/c_1| \le G^{q-1}$)}\nonumber\\
&\le 3 \sqrt{\Delta_1} n c_1 d \Delta_2^2 w_1^2 \le \frac{w_1 w_2c_1 }{12}
\end{align}
where the second to last bound follows from the fact that $w_1 < w_2$ and $w_2$ is set small enough so $(4Gd\Delta_2^2) \cdot w_2^2 \ll 1/2$ so the series converges to a sum $< 1$. 
Additionally, for $i \le n$ and $j \le k$ (i.e., considering the entries of $K$ corresponding to AC) we have:
\begin{align*}
K_{i,n+j} = c_0 + c_1 (E + E^T)_{i,n+j} - 2c_1 w_1 w_2 (AC)_{i,j} + \sum_{q=2}^\infty c_q D^{(q)}_{i,n+j}.
\end{align*}
This last sum can be bounded by:
\begin{align*}
\left |\sum_{q=2}^\infty c_q D^{(q)}_{i,n+j} \right | &\le c_1 \sum_{q=2}^\infty G^{q-1} ( 4\Delta_2^2d  w_2^2)^{q}\tag{By assumption $|c_q/c_1| \le G^{q-1}$}\\
 &\le c_1w_1 w_2 \sum_{q=2}^\infty G^{q-1} w_2^{2(q-1)} \frac{w_2}{w_1} \left ( 4\Delta_2^2d \right )^{q}\\
  &\le c_1w_1 w_2 \sum_{q=2}^\infty G^{q-1} w_2^{2q-3} \left ( 4\Delta_2^2d \right )^{q}\tag{Using $\frac{w_2}{w_1} \le \frac{1}{w_2}$.}\\
&\le \frac{1}{3}c_1 w_1 w_2.\tag{Using $w_2 \le \frac{1/4}{16 G\Delta_2^4 d^2}$ so the series converges.}
\end{align*}

If we set $v = NN^T_{i,n+j} -c_0 - c_1 (E+E^T)_{i,n+j}$ we thus have combining with \eqref{distLowRank} for $i \le n$, $j \le k$
\begin{align*}
\left | v + 2 c_1 w_1 w_2 (AC)_{i,j} \right | \le \frac{5c_1w_1w_2}{12}
\end{align*}
and so we can compute $(AC)_{i,j}$ exactly by rounding $v$ to the nearest integer multiple of $c_1w_1w_2$. This gives the theorem since we can compute the required entries of $NN^T$ and $E$ in $O(nk^{\omega-1})$ time.
\end{proof}

\end{document}